\newcommand{\inp}{u}
\newcommand{\out}{y}
\newcommand{\sigmetric}{d}
\begin{document}

\newtheorem{theorem}{Theorem}
\newtheorem{definition}{Definition}
\newtheorem{lemma}{Lemma}
\newtheorem{proposition}{Proposition}

% \title{Quantifying Closeness of Stochastic Input-Output Systems}%
\title{Conformance Testing for Stochastic Cyber-Physical Systems}

\author[1]{Xin Qin}
\author[1]{Navid Hashemi}
\author[1]{Lars Lindemann}
\author[1]{Jyotirmoy V. Deshmukh}
\affil[1]{Thomas Lord Department of Computer Science, University of Southern California}

%\thanks{Supported by organization x.}}
%
%\titlerunning{Abbreviated paper title}
% If the paper title is too long for the running head, you can set
% an abbreviated paper title here
%
%\author{First Author\inst{1}\orcidID{0000-1111-2222-3333} \and
%Second Author\inst{2,3}\orcidID{1111-2222-3333-4444} \and
%Third Author\inst{3}\orcidID{2222--3333-4444-5555}}
%
%\authorrunning{F. Author et al.}
% First names are abbreviated in the running head.
% If there are more than two authors, 'et al.' is used.
%
%\institute{Princeton University, Princeton NJ 08544, USA \and
%Springer Heidelberg, Tiergartenstr. 17, 69121 Heidelberg, Germany
%\email{lncs@springer.com}\\
%\url{http://www.springer.com/gp/computer-science/lncs} \and
%ABC Institute, Rupert-Karls-University Heidelberg, Heidelberg, Germany\\
%\email{\{abc,lncs\}@uni-heidelberg.de}}
% \date{\vspace{-5ex}}
        
\maketitle

%%%%%%%%%%%%%%%%%%%%%%%%%%%%%%%%%%%%%%%%%%%%%%%%%%%%%%%%%%%%%%%%%%%%%%%%%%%%%%%%
\begin{abstract}
Conformance is defined as a measure of distance between the behaviors of two dynamical systems. The notion of conformance can accelerate system design when models of varying fidelities are available on which analysis and control design can be done more efficiently. Ultimately, conformance can capture distance between design models and their real implementations and thus aid in robust system design. In this paper, we are interested in the conformance of stochastic dynamical systems. We argue that probabilistic reasoning over the distribution of distances between model trajectories is a good measure for \emph{stochastic conformance}. Additionally, we propose the \emph{non-conformance risk} to reason about the risk of stochastic systems not being conformant. We show that both notions have the desirable  \emph{transference} property, meaning that conformant systems satisfy similar system specifications, i.e., if the first model satisfies a desirable specification, the second model will satisfy (nearly) the same specification. Lastly, we propose how stochastic conformance and the non-conformance risk can be estimated from data using statistical tools such as conformal prediction. We present empirical evaluations of our method on an F-16 aircraft, an autonomous vehicle, a spacecraft, and Dubin's vehicle.
 %the Laub-Loomis dynamics,

%%%%%%%%%%%%%%
% \keywords{Conformance  \and Stochastic Systems \and Risk Measures \and Signal Temporal Logic \and Conformal Prediction.}
%%%%%%%%%%%%%%% 
\end{abstract}

%%%%%%%%%%%%%%%%%%%%%%%%%%%%%%%%%%%%%%%%%%%%%%%%%%%%%%%%%%%%%%%%%%%%%%%%%%%%%%%%

\section{Introduction}
\label{sec:introduction}

Cyber-physical systems (CPS) are usually designed using a model-based design (MBD) paradigm. Here, the designer models the physical parts and the operating environment of the system and  then designs the software used for perception, planning, and low-level control. Such closed-loop systems are then rigorously tested against various operating conditions, where the quality of the designed software is evaluated against {\em model properties} such as formal design specifications (or other kinds of quantitative objectives).  Examples of such property-based analysis techniques include requirement falsification \cite{bartocci2018specification,thibeault2021psy,akazaki2018falsification,deshmukh2019formal,qin2021automatic}, 
nondeterministic and statistical verification 
\cite{baier2008principles,wang2019statistical, legay2010statistical,legay2015statistical, agha2018survey,AlessandroAbate,KristinRozier,surrogateQ}, and risk analysis \cite{lindemann2022risk,akella2022sample}.

MBD is a fundamentally iterative process in which the  designer continuously modifies the software to tune performance or increase safety margins, or change plant models to perform design space exploration \cite{pimentel2016exploring}, e.g., using model abstraction or simplification \cite{schilders2008model,alur2000discrete,belta2017formal}, or to incorporate new data \cite{polydoros2017survey}. Any change to the system model, however, requires repeating the property-based analyses as many times as the number of system properties. The fundamental problem that we consider in this paper is that of  conformance  \cite{conformance_survey,deshmukh2015quantifying,abbas2014conformance,abbas2014formal,abbas2015test}. The notion
of conformance is defined w.r.t. the input-output behavior of a model. Typically, model inputs include exogenous disturbances or user-inputs to
the model, user-controllable design parameters, and initial operating  conditions. For a given input $\inp$, let $\out = S(\inp)$ 
denote the observable behavior of the model $S$. Furthermore, let $\sigmetric(\out_1,\out_2)$ be a metric defined over the space of the model behaviors. For deterministic models, two models $S_1$, $S_2$ are said to be
$\delta$-{\em conformant} if for all inputs $\inp$ it holds that $\sigmetric(\out_1,\out_2) < \delta$ where
$\out_1 = S_1(\inp)$ and  $\out_2 = S_2(\inp)$  \cite{deshmukh2015quantifying,abbas2014conformance,abbas2015test}. This notion of deterministic conformance is useful to reason about worst-case differences between models. However, most CPS applications use components that exhibit stochastic behavior; for example, sensors have measurement noise, actuators can have manufacturing variations, and most physical phenomena are inherently stochastic. The central question that this paper 
considers is: {\em What is the notion of conformance between two stochastic CPS models?} 

There are some challenges in comparing stochastic CPS models;
even if two models are repeatedly excited by the same input,
the pair of model behaviors that are observed may be different
for every such simulation. Thus, the observable behavior of a 
stochastic model is more accurately characterized by a distribution
over the space of trajectories. A possible way to compare two
stochastic models is to use measure-theoretic techniques to compare
the distance between the trajectory distributions. A number of 
divergence measures such as the f-divergences, e.g., the Kullback-Leibler divergence and the total
variation distance, or the Wasserstein distance  may look like candidate tools to compare the trajectory
distributions. However, we argue in this paper that a divergence 
is not the right notion to use to compare stochastic CPS models.
There can be two stochastic models whose output trajectories are
very close using any trajectory space metric, but the divergence
between their trajectory distributions can be infinite. On the
other hand, there can be two trajectory distributions with zero
divergence for which the distance between trajectories can be 
arbitrarily far apart.

This raises an interesting question: how do we then compare two stochastic models? In this paper, we argue that probabilistic bounds derived from the distribution of the distances between model trajectories (excited by the same input) gives us a general definition of conformance that has several advantages, as outlined below. We complement this probabilistic viewpoint further and capture the risk that the distribution of the distancs between model trajectories is large leveraging risk measures \cite{majumdar2020should}. 

First, we show that
two stochastic systems that are conformant under our definition inherit the
property of {\em transference} \cite{deshmukh2015quantifying}. In
simple terms, transference is the property that if the first model
has certain logical or quantitative properties, then the second
model also satisfies the same (or nearly same) properties. This
property brings several benefits. Consider the scenario where
probabilistic guarantees that a model has certain quantitative properties have been established after an extensive and large
number of simulations. Ordinarily, if there were any changes 
made to the model, establishing such probabilistic guarantees would
require repeating the extensive simulation-based procedure. However,
transference allows us to potentially sample from 
existing simulations for the first model and sample a small number
of simulations from the modified model to establish stochastic
conformance between the models, thereby allowing us to establish
probabilistic guarantees on the {\em second} model. We demonstrate
examples of such transference w.r.t. quantitative properties 
arising from quantitative semantics of temporal logic 
specifications and control-theoretic cost functions.

% \textcolor{blue}{remove reachability}
%%%%%%%
% , as well as
% reachable state sets obtained through reachability analysis/model
% checking tools.
%%%%%%%%

%%%%
% Second,
%%%%
Next, we show how we can efficiently compute these probabilistic bounds
using the notion of {\em conformal prediction} \cite{tibshirani2019conformal,vovk2005algorithmic} from statistical learning theory. At a high-level, conformal prediction involves 
computing quantiles of the empirical distribution of 
non-conformity scores over a validation dataset to obtain prediction intervals at a given
confidence threshold. 

% \textcolor{blue}{removed the risk estimitation here.}
%%%%%%%%%%%%%%%%%
% Using existing results from risk theory, we further show how to estimate the risk that two stochastic systems are non-conformant with high confidence \cite{massart1990tight,wang2010deviation}.
%%%%%%%%%%%%%%%%%%%%
% \textcolor{blue}{removed the third here.}
%%%%%  
% Third, we show that our definition of
% conformance allows us to obtain probabilistic guarantees on
% conformance of {\em deterministic} CPS models, establishing that
% it is a general definition that subsumes previous notions of 
% conformance. 
%%%%%%

The contributions of this paper are summarized as follows:
\begin{itemize}
    \item We define \emph{stochastic conformance} as a probabilistic bound over the distribution of distances between model trajectories. We also define the \emph{non-conformance risk} to detect systems that are at risk of not being conformant.
    \item We show that both notions have the desirable transference property, meaning that conformant systems satisfy similar system specifications.
    \item We show how stochastic conformance and the non-conformance risk can be estimated  using statistical tools from risk theory and conformal prediction.
\end{itemize}

\section{Problem Statement and Preliminaries}
\label{sec:pf}

Consider the probability space $(\Omega,\mathcal{F},P)$  where $\Omega$ is
the sample space, $\mathcal{F}$ is a $\sigma$-algebra\footnote{A 
$\sigma$-algebra on a set $\Omega$ is a nonempty collection of subsets of 
$\Omega$ closed under complement, countable unions, and countable 
intersections.} of $\Omega$, and $P:\mathcal{F}\to[0,1]$ is a probability 
measure. In this paper, our goal is to quantify conformance of stochastic 
systems, i.e., systems whose inputs and outputs form a probability space 
with an appropriately defined measure. Let the two stochastic systems be 
denoted by $S_1$ and $S_2$. The inputs and outputs of stochastic systems
are {\em signals}, i.e., functions from a bounded interval
of positive reals known as the {\em time domain} $\mathbb{T} \subseteq 
\mathbb{R}_{\ge 0}$ to a metric space, e.g., the standard Euclidean metric.
Each stochastic system $S_i$ then describes an 
input-output relation $S_i:\mathcal{U}\times \Omega\to\mathcal{Y}$ where 
$\mathcal{U}$ and $\mathcal{Y}$ denote the sets of all input and output 
signals. We allow input signals\footnote{Probability spaces over signals
are defined by standard notions of cylinder sets \cite{baier2008principles}.}
to be stochastic, and we use the 
notation $U:\mathbb{T}\times \Omega \to \mathbb{R}^m$ to denote a stochastic 
input signals.\footnote{We will instead of the probability measure $P$, defined over $(\Omega,\mathcal{F})$, use more generally the notation $\text{Prob}$ to be independent of the underlying probability space that we induce, e.g., as a result of transformations via $U$.} Modeling stochastic systems this way provides great 
flexibility, and $S_i$ can e.g., describe the motion of stochastic hybrid 
systems, Markov chains, and stochastic difference equations. 

Assume now that we apply the input signal $U:\mathbb{T}\times \Omega \to \mathbb{R}^m$
to systems $S_1$ and $S_2$, and let the resulting output signals be denoted by $Y_1:\mathbb{T}\times \Omega \to \mathbb{R}^n$ and $Y_2:\mathbb{T}\times \Omega \to \mathbb{R}^n$, respectively. We assume that the functions $S_1$, $S_2$, and $U$ are measurable so that the output signals $Y_1$ and $Y_2$ are well-defined stochastic signals. One can hence think of $Y_1$ and $Y_2$ to be drawn from the distributions $\mathcal{D}_1$ and $\mathcal{D}_2$, respectively, which are functions of the probability space $(\Omega,\mathcal{F},P)$ as well as the functions $S_1$, $S_2$, and $U$. In this paper, we make no restricting assumptions on the functions $S_1$, $S_2$, and $U$, and consequently we make no assumptions on the distributions  $\mathcal{D}_1$ and $\mathcal{D}_2$.

\textbf{Informal Problem Statement.} Let $Y_1$ and $Y_2$ be stochastic output signals of the stochastic systems $S_1$ and $S_2$, respectively, under the stochastic input signal $U$. How can we quantify closeness of the stochastic systems $S_1$ and $S_2$ under $U$? To answer this question, we will explore different ways of defining system ``closeness'' of $Y_1$ and $Y_2$, and we will present algorithms to compute these stochastic notions of closeness. A subsequent problem that we consider is related to transference of properties from one system to another system. Particularly, given a signal temporal logic specification, can we infer guarantees about the satisfaction of the specification of one system from another system if the systems are close under a suitable definition of closeness?

 \subsection{Distance Metrics and Risk Measures} To define a general framework for quantifying closeness of stochastic systems, we will use i)  different signal metrics to capture the distance between individual realizations $y_1:=Y_1(\cdot,\omega)$ and $y_2:=Y_2(\cdot,\omega)$ of the stochastic signals $Y_1$ and $Y_2$  where $\omega\in\Omega$ is a single outcome, and ii) probabilistic reasoning and risk measures to capture stochastic conformance and non-conformance, respectively, under these signal metrics.

We first equip the set of output signals $\mathcal{Y}$ with a function $d:\mathcal{Y}\times\mathcal{Y}\to \mathbb{R}$ that quantifies distance between signals. A natural choice of $d$ is a signal metric that results in a metric space $(\mathcal{Y},d)$. We use general signal metrics such as the metric induced by the $L_p$ signal norm for $p\ge 1$. Particularly,  define
% \begin{align*}
    \(d_p(y_1,y_2):=\Big(\int_{\mathbb{T}} \|y_1(t)-y_2(t)\|^p\mathrm{d}t\Big)^{1/p}\)
% \end{align*}
so that the $L_\infty$ norm can also be expressed as $d_\infty(y_1,y_2):=\sup_{t\in \mathbb{T}} \|y_1(t)-y_2(t)\|$.

It is now easy to see that a signal metric $d(Y_1,Y_2)$ evaluated over the stochastic signals $Y_1$ and $Y_2$ results in a distribution over distances between realizations of $Y_1$ and $Y_2$. To reason over properties of $d(Y_1,Y_2)$, we will use probabilistic reasoning but we will also consider risk measures \cite{majumdar2020should} as introduced next. 

A risk measure is a function $R:\mathfrak{F}(\Omega,\mathbb{R})\to \mathbb{R}$ that maps the set of real-valued random variables  $\mathfrak{F}(\Omega,\mathbb{R})$ to the real numbers. Typically, the input of $R$ indicates a cost. There exist various risk measures that capture different characteristic of the distribution of the cost random variable, such as the mean or the variance. However, we are particularly interested in tail risk measures that capture the right tail of the cost distribution, i.e., the potentially rare but costly outcomes.
% \textcolor{blue}{remove risk illustration image}
%%%%%%%%%%%%%
% \label{sec:risk_measures}
% 	\begin{figure}
% 	\centering
% 	\includegraphics[scale=0.32]{}
% 	\caption{The value-at-risk and the conditional value-at-risk, figure taken from \cite{lindemann2022risk}.}
% 	\label{ex:2_figure}
% \end{figure}
%%%%%%%%%%%%%%%

In this paper, we particularly consider the value-at-risk $VaR_\beta$ and the conditional value-at-risk $CVaR_\beta$ at risk level $\beta\in (0,1)$.
% \textcolor{blue}{remove illustration text}
%see Fig. \ref{ex:2_figure} for an illustration.
The $VaR_\beta$ of a  random variable $Z:\Omega\to\mathbb{R}$ is defined as
\begin{align*}
VaR_\beta(Z):=\inf \{ \alpha \in \mathbb{R} |  \text{Prob}(Z\le\alpha) \ge \beta \}, 
\end{align*}
i.e., $VaR_\beta(Z)$ captures the $1-\beta$ quantile of the distribution of $Z$ from the right. Note that there is an obvious connection between value-at-risk and chance constraints, i.e., it holds that $\text{Prob}(Z\le \alpha)\ge \beta$ is equivalent to $VaR_\beta(Z)\le \alpha$. The $CVaR_\beta$ of $Z$, on the other hand, is defined as
\begin{align*}
CVaR_\beta(Z):=\inf_{\alpha \in \mathbb{R}}  \big(\alpha+(1-\beta)^{-1}E([Z-\alpha]^+)\big)
\end{align*} 
where $[Z-\alpha]^+:=\max(Z-\alpha,0)$ and $E(\cdot)$ indicates the expected value. When the function $\text{Prob}(Z\le\alpha)$ is continuous (in $\alpha$), it holds that $CVaR_\beta(Z)=E(Z|Z\ge VaR_\beta(Z))$, i.e., $CVaR_\beta(Z)$ is the expected value of $Z$ conditioned on the outcomes where $Z$ is greater or equal than  $VaR_\beta(Z)$. Finally, note that it holds that $VaR_\beta(Z)\le CVaR_\beta(Z)$, i.e., $CVaR_\beta$ is more risk sensitive.

For our risk transference results that we present later, we will require that $R$ is \emph{monotone}, \emph{positive homogeneous}, and \emph{subadditive}:
 \begin{itemize}[nosep,leftmargin=1em]
     \item For two random variables $Z,Z'\in \mathfrak{F}(\Omega,\mathbb{R})$, the risk measure $R$ is monotone if $Z(\omega) \leq Z'(\omega)$ for all $\omega\in\Omega$ implies that $R(Z) \le R(Z')$.
     \item For a random variable $Z\in \mathfrak{F}(\Omega,\mathbb{R})$, the risk measure $R$ is positive homogeneous if, for any constant $H\ge 0$, it holds that $R(HZ) = HR(Z)$.
     \item For two random variables $Z,Z'\in \mathfrak{F}(\Omega,\mathbb{R})$, the risk measure $R$ is subadditive if  $R(Z + Z') \le R(Z) + R(Z')$.
 \end{itemize} 
 We remark that  the $VaR_\beta$ and the $CVaR_\beta$ satisfies all three properties \cite{majumdar2020should}.

% \textcolor{blue}{remove monotone, positive, monotone.}
%%%%%%%%%%%%%%%%
% For our risk transference results that we present later, we will require that $R$ is \emph{monotone}, \emph{positive homogeneous}, and \emph{comonotone additive}:
%  \begin{itemize}[nosep,leftmargin=1em]
%      \item For two random variables $Z,Z'\in \mathfrak{F}(\Omega,\mathbb{R})$, the risk measure $R$ is monotone if $Z(\omega) \leq Z'(\omega)$ for all $\omega\in\Omega$ implies that $R(Z) \le R(Z')$.
%      \item For a random variable $Z\in \mathfrak{F}(\Omega,\mathbb{R})$, the risk measure $R$ is positive homogeneous if, for any constant $H\ge 0$, it holds that $R(HZ) = HR(Z)$.
%      \item For two comotone random variables $Z,Z'\in \mathfrak{F}(\Omega,\mathbb{R})$, i.e., $Z$ and $Z'$ are such that $(Z(\omega)-Z(\omega'))(Z'(\omega)-Z'(\omega'))\ge 0$ for all $(\omega,\omega')\in\Omega\times\Omega$, the risk measure $R$ is comotone additive if  $R(Z + Z') = R(Z) + R(Z')$.
%  \end{itemize} 
% We remark that  the $VaR_\beta$ and the $CVaR_\beta$ satisfy all three properties \cite{majumdar2020should}.
%%%%%%%%%%%%%%%%%%%%

\subsection{System specifications} To express  specifications, we use Signal Temporal Logic (STL). Let $y:\mathbb{T}\to\mathbb{R}^n$ be a deterministic signal, e.g., a realization of the stochastic signal $Y$. The atomic elements of STL are predicates that are functions $\mu:\mathbb{R}^n\to\{\text{True},\text{False}\}$. For convenience, the predicate $\mu$ is often defined via a predicate function $h:\mathbb{R}^n\to\mathbb{R}$ as $\mu(y(t)):=\text{True}$ if $h(y(t))\ge 0$ and $\mu(y(t)):=\text{False}$ otherwise. The syntax of STL is recursively defined as 
\begin{align}\label{eq:full_STL}
\phi \; ::= \; \text{True} \; | \; \mu \; | \;  \neg \phi' \; | \; \phi' \wedge \phi'' \; | \; \phi'  U_I \phi'' 
\end{align}
where $\phi'$ and $\phi''$ are STL formulas. The Boolean operators $\neg$ and $\wedge$ encode negations (``not'') and conjunctions (``and''), respectively. The until operator $\phi' {U}_I \phi''$ encodes that $\phi'$ has to be true until $\phi''$ becomes true at some future time within the time interval $I\subseteq \mathbb{R}_{\ge 0}$. We  derive the operators for disjunction ($\phi' \vee \phi'':=\neg(\neg\phi' \wedge \neg\phi'')$), eventually ($F_I\phi:=\top U_I\phi$), and always ($G_I\phi:=\neg F_I\neg \phi$).

To determine if a signal $y$ satisfies an STL formula $\phi$ that is imposed at time $t$, we can define the semantics as a relation $\models$, i.e.,  $(y,t) \models\phi$ means that $\phi$ is satisfied. While the STL semantics are fairly standard \cite{maler2004monitoring}, we recall them in Appendix~\ref{app:STL}. Additionally, we can define robust semantics $\rho^{\phi}(y,t)\in\mathbb{R}$ that indicate how robustly the formula $\phi$ is satisfied or violated \cite{donze2,fainekos2009robustness}, see Appendix~\ref{app:STL}. Larger and positive values of $\rho^{\phi}(y,t)$ hence indicate that the specification is satisfied more robustly. Importantly, it holds that $(y,t)\models \phi$ if $\rho^\phi(y,t)>0$.

%Signal temporal logic is a precise and popular method to specify requirements considering temporal properties for CPS systems. The atomic component of an STL formula is a signal predicate in the form $f(u) \bigcirc  c$, where $\bigcirc \in \{ \leq, \geq, =, >, < \} $, $u$ is a signal, $f$ is a function $f: \mathbb{R}^n \rightarrow \mathbb{R}$, and $c$ is a real-valued number.
%The syntax of STL is defined formally as 
%\begin{align}\label{eq:full_STL}
%\phi \; ::= \; \text{True} \; | \; \mu \; | \;  \neg \varphi \; | \; \varphi \wedge \phi \; | \;  G_I \varphi \; | \; F_I \varphi \; | \varphi  U_I \phi \; | \; \varphi R_I \phi \,
%\end{align}
%where $\mu$ holds when the automatic predicate $f(u) \bigcirc c$, which is the only predicate in $\mu$, holds true. The temporal operators involved in the above syntax are $G$ as the "always" operator, $F$ as the "eventually" operator, and $U$ as the "until" operator. We call $I$ the time interval, and assume $I = [a,b]$, where $a,b$ are real non-negative values and $a \leq b$. We denote a signal $u$ satisfying an STL formula $\varphi$ at time $t$ as $(u, t) \models \varphi$. The differences between temporal operators are, $G_I\phi$ requires $\forall t \in [t + a, t + b], \phi$ is $\text{True}$, $F_I\phi$ requires $\exists t \in [t + a, t + b], \phi$ is $\text{True}$.  

\section{Conformance for Stochastic Input-Output Systems}

Our goal is now to quantify closeness of  two stochastic systems $S_1$ and $S_2$ under the input $U$. We present our definitions for stochastic conformance and non-conformance risk upfront, and provide motivation for these afterwards. 
\begin{definition}\label{def:1}
Let $U:\mathbb{T}\times\Omega\to\mathbb{R}^m$ be a stochastic input signal, $S_1,S_2:\mathcal{U}\times\Omega\to\mathcal{Y}$ be  stochastic systems, and $Y_1,Y_2:\mathbb{T}\times\Omega\to\mathbb{R}^n$ be stochastic output signals with $Y_1:=S_1(U,\cdot)$ and $Y_2:=S_2(U,\cdot)$. Further, let $\epsilon\in\mathbb{R}$ be a conformance threshold, $\delta\in (0,1)$ be a failure probability, and $d:\mathcal{Y}\times\mathcal{Y}\to\mathbb{R}$ be a signal metric. Then, we say that the systems $S_1$ and $S_2$ under the input $U$ are $(\epsilon,\delta)$-conformant if
\begin{align}\label{eq:conformance}
    \text{Prob}(d(Y_1,Y_2)\le \epsilon)\ge 1-\delta.
\end{align}
Additionally, let $R:\mathfrak{F}(\Omega,\mathbb{R})\to \mathbb{R}$ be a risk measure and $r\in\mathbb{R}$ be a risk threshold. Then, $S_1$ and $S_2$ under the input $U$ are at risk of being $r$-non-conformant if
\begin{align}\label{eq:risk_conformance}
    R(d(Y_1,Y_2))> r.
\end{align}
\end{definition}

\noindent Eq.~\eqref{eq:conformance} is referred to as stochastic conformance and Eq.~\eqref{eq:risk_conformance} as non-conformance risk. Let us now motivate and discuss these two definitions. While the definition of conformance in equation \eqref{eq:conformance} appears natural at first sight, there are at least two competing ways of defining stochastic conformance. First, as $Y_1$ and $Y_2$ are distributions, it would be possible to define conformance as $D(Y_1,Y_2)$ where $D$ is a distance function that measures the difference between two distributions, such as a divergence (Kullback–Leibler or $f$-divergence). However, our definition provides an intuitive interpretation in the signal space where system specifications are typically defined, while it is usually difficult to provide such an interpretation for divergences between distributions. Additionally, the divergence between $Y_1$ and $Y_2$ may be unbounded (or zero) even when equation \eqref{eq:conformance} holds (does not hold). 
% The proof of the following proposition can be found in Appendix \ref{sec:divergence}.

\begin{proposition}
There exist stochastic systems $S_1$ and $S_2$ and distance metrics $d$ where equation~\eqref{eq:conformance}: i) is satisfied for $\epsilon>0$ and $\delta = 0$, i.e., w.p. $1$, but where the divergence between
the systems is unbounded, and ii) is not satisfied for any given $\epsilon>0$ and $\delta\in(0,1)$, but where the divergence between the systems is zero.
\end{proposition}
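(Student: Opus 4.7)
The plan is to exhibit two explicit pairs of stochastic systems (both driven by a trivial input $U$) that witness the two claims. For part (i) I would couple the outputs tightly so that their realized distance is always small, yet arrange for their marginal distributions to have disjoint parts of support. For part (ii) I would decouple the outputs via independent randomness so that the marginal distributions coincide while the realized distance remains large with high probability.

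For part (i), fix a constant $c\in(0,\epsilon)$ and let $W:\Omega\to[0,1]$ be uniform. Define constant-in-time output signals $Y_1(t,\omega):=W(\omega)$ and $Y_2(t,\omega):=W(\omega)+c$, and equip $\mathcal{Y}$ with the sup metric $d_\infty$. Then $d_\infty(Y_1,Y_2)=c<\epsilon$ for every $\omega\in\Omega$, so $\text{Prob}(d_\infty(Y_1,Y_2)\le\epsilon)=1$, which is exactly equation~\eqref{eq:conformance} with $\delta=0$. However the marginal of $Y_1$ is $\mathrm{Uniform}[0,1]$ and that of $Y_2$ is $\mathrm{Uniform}[c,1+c]$, so on the set $[0,c)$ the density of $Y_1$ is positive while that of $Y_2$ is zero. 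The Kullback--Leibler divergence (and indeed any $f$-divergence with an unbounded generator) therefore evaluates to $+\infty$, giving the claimed unbounded divergence alongside conformance with $\delta=0$.

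For part (ii), let $W_1,W_2$ be independent standard Gaussians and set $Y_i(t,\omega):=W_i(\omega)$. The marginal laws of $Y_1$ and $Y_2$ are identical, so every $f$-divergence between them is zero. On the other hand $d_\infty(Y_1,Y_2)=|W_1-W_2|$ is an absolutely continuous random variable with full support on $[0,\infty)$, hence for any prescribed $\delta\in(0,1)$ one can pick $\epsilon>0$ small enough that $\text{Prob}(|W_1-W_2|\le\epsilon)<1-\delta$, so~\eqref{eq:conformance} is violated. This already suffices for the natural reading of (ii), while the divergence side is zero.

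The main obstacle is the strict reading of (ii) in which~\eqref{eq:conformance} must fail \emph{simultaneously} for every $\epsilon>0$ and every $\delta\in(0,1)$; that reading forces $d(Y_1,Y_2)=+\infty$ almost surely, which the Gaussian example does not achieve. To handle that case I would replace the constant-Gaussian outputs by identically distributed signals whose difference has a non-integrable singularity in the bounded time domain. Concretely, draw $T_1,T_2\sim\mathrm{Uniform}(\mathbb{T})$ independently and set $Y_i(t,\omega):=|t-T_i(\omega)|^{-1/p}$, so that $Y_1$ and $Y_2$ are identically distributed (zero divergence), while almost surely $T_1\ne T_2$ and the difference signal has singularities at both $T_1$ and $T_2$, giving $d_p(Y_1,Y_2)=+\infty$ almost surely. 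The only remaining technical work in either construction is to check measurability of $S_1$, $S_2$, and $U$, which is routine.
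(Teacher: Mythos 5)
Your proposal is correct overall, but it diverges from the paper's proof in part (ii) and has one quantifier subtlety worth flagging. For part (i) you and the paper do essentially the same thing: a deterministic offset $c<\epsilon$ (resp.\ exactly $\epsilon$) between tightly coupled outputs, so that $d_\infty\le\epsilon$ with probability one while the marginals have disjoint pieces of support; the paper uses Dirac marginals at $0$ and $\epsilon$, you use shifted uniforms, which if anything is cleaner since it avoids quibbles about KL between point masses. For part (ii) the paper takes $Y_1,Y_2$ i.i.d.\ uniform on $[0,a]$, computes the triangular density of $Y_1-Y_2$ explicitly, and chooses $a$ as a function of the \emph{given} pair $(\epsilon,\delta)$ so that $\text{Prob}(|Y_1-Y_2|\le\epsilon)<1-\delta$; i.e., the intended reading is ``for each given $(\epsilon,\delta)$ there exist systems.'' Your fixed standard-Gaussian example does not establish this: it only shows that for each $\delta$ there is \emph{some} small $\epsilon$ violating \eqref{eq:conformance}, and for a large given $\epsilon$ the standard Gaussians are in fact conformant. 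The fix is trivial (let the variance grow with $\epsilon$ and $\delta$, exactly as the paper lets $a$ grow), and your backup construction with $Y_i(t)=|t-T_i|^{-1/p}$ is strictly stronger, since $d_p(Y_1,Y_2)=+\infty$ almost surely kills \eqref{eq:conformance} for \emph{all} $(\epsilon,\delta)$ simultaneously with a single pair of identically distributed systems — something the paper's finite-support example cannot do. The only costs of that stronger route are the technicalities you note (measurability, and the fact that $d$ is declared with codomain $\mathbb{R}$ rather than $\mathbb{R}\cup\{+\infty\}$), which are easily dispatched.
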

\begin{proof}
Let us first prove i). For simplicity, consider systems $S_1$ and $S_2$ where the stochastic input and output signals
are defined over the time domain $\mathbb{T} := \{t_0,\ldots,t_T\}$. Further,
for all $t \in \mathbb{T}$ let $y_1(t) := 0$ and $y_2(t) := \epsilon$.
Clearly, equation~\eqref{eq:conformance} is satisfied, e.g., for $d_\infty$. The 
distributions $D_1$ and $D_2$ (joint distributions of
$Y_1(t)$ and $Y_2(t)$, respectively) are Dirac distributions
centered at $0$ and $\epsilon$, respectively. The Kullback–Leibler divergence between these two
distributions is $\infty$ \cite{gray2011entropy}.

Let us now prove ii). Let $\mathbb{T}$ consist of a single time point for simplicity so that $Y_1$ and $Y_2$ are random variables defined over a sample space $\mathbb{R}$. Let $D_1$ and $D_2$ be the
{\em same} uniform distribution over $[0,a]$. Clearly, the divergence between
$D_1$ and $D_1$ is zero. \textcolor{black}{We know  that the distribution of $Y:=Y_1-Y_2$ has support on $[-a,a]$, and that the probability density function of $Y$ is $p(y):=1/a-|y|/a^2$. We can now compute  $\text{Prob}(|y|\le \epsilon)=2\epsilon/a-\epsilon^2/a^2$.  Given $\epsilon>0$ and $\delta\in(0,1)$, we pick an $\bar{\delta}\in (0,1)$ such that $\bar{\delta}>\delta$. We then solve the quadratic equation $2\epsilon/a-\epsilon^2/a^2=1-\bar{\delta}$ subject to the constraint that $\epsilon\le a$. Consequently, we find that $a\ge \epsilon/(1-\bar{\delta})(1+\sqrt{\bar{\delta}})$ results in $\text{Prob}(|y|\le \epsilon)< 1-\delta$ so that \eqref{eq:conformance} is not satisfied.} 
\end{proof}

% For instance, there are stochastic systems whose realizations are close, 
% but where the total variation (an $f$-divergence) is unbounded (infinity).
% This is the case when two stochastic systems produce unique\footnote{By unique,
% we mean that each signal $y\in\mathcal{Y}$ can be equivalent to 
% $Y_1(\cdot,\omega)$ or $Y_2(\cdot,\omega)$ for at most one outcome $\omega\in\Omega$.} realizations $y_1$ and $y_2$ that have a bounded distance in some metric $d$. 

%\begin{proposition}
%There exist stochastic systems $S_1$ and $S_2$ and distance metrics $d$ where equation~\eqref{eq:conformance} is not satisfied for any given $\epsilon,\delta>0$, but where the divergence between the systems is zero.
%\end{proposition}

% On the other hand, one can encounter cases where the total variation between two distributions is zero, while realizations $y_1$ and $y_2$ have a non-zero distance. Second,

Another way of defining stochastic conformance was presented in \cite{wang2021probabilistic} where the authors consider a task-specific definition of 
stochastic conformance where satisfaction probabilities are required to be 
approximately equal. In other words, two stochastic systems are called $c$-approximately 
probabilistically conformant if $|\text{Prob}((Y_1,\tau)\models\phi)-\text{Prob}((Y_2,\tau)\models\phi)|\le c$. In this definition, it may happen that two systems are $c$-approximately probabilistically conformant for a small value of $c$, while the systems produce completely different behaviors and individual realizations $y_1$ and $y_2$ are vastly different. Additionally to not being task specific, our definition covers the risk of being $r$-non-conformant in equation \eqref{eq:risk_conformance}.

Finally, we would like to remark that the definition of conformance in equation \eqref{eq:conformance} is related to the definition of non-conformance risk in equation \eqref{eq:risk_conformance}. In fact, when the risk measure $R$ is the value-at-risk $VaR_\beta$, then we know that 
\begin{align*}
    VaR_\beta(d(Y_1,Y_2))> r \;\;\; \Leftrightarrow \;\;\; \text{Prob}(d(Y_1,Y_2) \le r)<\beta 
\end{align*}
since $VaR_\beta(d(Y_1,Y_2))\le r$ is equivalent to $\text{Prob}(d(Y_1,Y_2) \le r)\ge \beta$ according to Section \ref{sec:pf}. Consequently, if $\beta:=1-\delta$ and $r:=\epsilon$ then  $VaR_\beta(d(Y_1,Y_2))>r$ implies that the systems $S_1$ and $S_2$ under $U$ are not $(\epsilon,\delta)$-conformant.

The notion of conformance in Definition \ref{def:1} is useful when
the input $U$ describes internal inputs such as system parameters (an 
unknown mass), exogeneous disturbances from known sources, or initial
system conditions. In other words, the distribution $U$ is known,
making $U$ a {\em known unknown}.
However, in case of external inputs that could be manipulated (e.g.
user inputs that represent rare malicious attacks), the input 
$U$ may be unknown, making $U$ an {\em unknown unknown}.
We therefore provide an alternative definition of conformance.
\begin{definition}\label{def:2}
Let $U\in \mathcal{U}$ be an unknown deterministic input signal,  $S_1,S_2:\mathcal{U}\times\Omega\to\mathcal{Y}$ be stochastic systems, and $Y_1,Y_2:\mathbb{T}\times\Omega\to\mathbb{R}$ be stochastic output signals with $Y_1:=S_1(U,\cdot)$ and $Y_2:=S_2(U,\cdot)$. Further, let $\epsilon\in\mathbb{R}$ be a conformance threshold, $\delta\in (0,1)$ be a failure probability, and $d:\mathcal{Y}\times\mathcal{Y}\to\mathbb{R}$ be a signal metric. Then, we say that the systems $S_1$ and $S_2$ are $(\epsilon,\delta)$-conformant if
\begin{align}\label{eq:conformance_}
    \text{Prob}\Big(\sup_{U\in\mathcal{U}} d(Y_1,Y_2)\le \epsilon\Big)\ge 1-\delta.
\end{align}
Additionally, let $R:\mathfrak{F}(\Omega,\mathbb{R})\to \mathbb{R}$ be a risk measure and $r\in\mathbb{R}$ be a risk threshold. Then, we say that the systems $S_1$ and $S_2$ under the input $U$ are at risk of being $r$-non-conformant if
\begin{align}\label{eq:risk_conformance_}
    R\Big(\sup_{U\in\mathcal{U}}d(Y_1,Y_2)\Big)> r.
\end{align}
\end{definition}

Based on this definition, note that it will be inherently more difficult to verify Definition \ref{def:2} compared to Definition \ref{def:1} due to the $\sup$-operator.

%\begin{proof}
%\begin{align*}
%&VaR_\beta(-d(Y_1,Y_2)) :=\inf \{ \epsilon \in \mathbb{R} |  \text{Prob}(-d(Y_1,Y_2)\le\epsilon) \ge \beta \}, \\
%&VaR_\beta(-d(Y_1,Y_2))\le \alpha \implies \epsilon \leq \alpha
%\implies -\alpha \leq - \epsilon \\ 
%& \text{Prob}(-d(Y_1,Y_2)\le\epsilon) \ge \beta  \Leftrightarrow \text{Prob}(-\epsilon \le d(Y_1,Y_2)) \ge \beta \\
%& \implies \text{Prob}(-\alpha \le d(Y_1,Y_2)) \ge \beta
% &\text{let} \; \epsilon = -\alpha \\
%\end{align*}
%\end{proof}
\section{Transference of System Properties under Conformance}
We expect two systems $S_1$ and $S_2$ that are $(\epsilon,\delta)$-conformant in the sense of Definitions \ref{def:1} and \ref{def:2} to have similar behaviors with respect to satisfying a given system specification. Therefore, we will define the notion of transference with respect to a performance function $C:\mathcal{Y}\to\mathbb{R}$ that measures how well a signal $y\in \mathcal{Y}$ satisfies this system specification. Towards capturing similarity between $S_1$ and $S_2$ with respect to $C$, the signal metric $d$ has to be chosen carefully. 
\begin{definition}\label{def:3}
Let $d:\mathcal{Y}\times\mathcal{Y}$ be a signal metric and $C:\mathcal{Y}\to\mathbb{R}$ be a performance function. Then, we say that $C$ is Hölder continuous w.r.t. $d$ if there exists constants $H,\gamma>0$ such that, for any two signals $y_1,y_1:\mathbb{T}\to\mathbb{R}^n$, it holds that
\begin{align}\label{eq:continuity}
    |C(y_1)-C(y_2)|\le H d(y_1,y_2)^\gamma
\end{align}
\end{definition}

A specific example of the performance function $C$ is the robust semantics $\rho^\phi$ of an STL specification $\phi$. In fact, the robust semantics $\rho^\phi$ are Hölder continuous w.r.t. the sup-norm $d_\infty$ for constants $H=1$ and $\gamma=1$ \cite[Lemma 2]{cleaveland2022risk}. For the convenience of the reader, we state the proof of \cite[Lemma 2]{cleaveland2022risk} with the notation used in this paper in Appendix \ref{sec:hölder}. The robust semantics are also Hölder continuous w.r.t. the Skorokhod metric, see Appendix \ref{sec:hölder2}. A commonly used performance function in control is $C(y) = \int_{0}^{T} y(t)^\top y(t)dt$, and we note that this choice of $C$ is Hölder continuous w.r.t. $d_1$ as shown in Appendix \ref{sec:hölder3}. Finally, note that the Hölder continuity condition in equation \eqref{eq:continuity}  implies that, for any constants $c,\epsilon\in \mathbb{R}$, it holds that
\begin{align}\label{eq:sensible} 
    \big(C(y_1)\ge c \;\;\wedge\;\; d(y_1,y_2)\le \epsilon\big) \;\;\; \Rightarrow \;\;\; C(y_2)\ge c-H \epsilon^\gamma.
\end{align}

\subsection{Transference under stochastic conformance} With the definition of $C$ being Hölder continuous w.r.t. $d$, we can now derive a stochastic transference result under stochastic conformance as per Definition \ref{def:1}.
 
\begin{theorem}\label{thm:1}
Let the premises in Definitions \ref{def:1} and \ref{def:3} hold. Further, let the systems $S_1$ and $S_2$ under the input $U$ be $(\epsilon,\delta)$-conformant so that equation \eqref{eq:conformance} holds and let $C$ be Hölder continuous w.r.t. $d$ so that equation \eqref{eq:continuity} holds. Then, it holds that
\begin{align*}
    \text{Prob}\big(C(Y_1)\ge c\big)\ge 1-\bar{\delta} \;\;\; \Rightarrow \;\;\;  \text{Prob}\big(C(Y_2)\ge c-H \epsilon^\gamma\big)\ge 1-\delta-\bar{\delta}.
\end{align*}
\end{theorem}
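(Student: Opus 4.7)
The plan is to combine the two high-probability events---stochastic conformance and the hypothesis on $C(Y_1)$---via a union bound, using the pointwise implication in equation \eqref{eq:sensible} that Hölder continuity guarantees. Concretely, I would introduce the two events
\[
A := \{\omega\in\Omega \mid C(Y_1(\cdot,\omega)) \ge c\}, \qquad B := \{\omega\in\Omega \mid d(Y_1(\cdot,\omega),Y_2(\cdot,\omega)) \le \epsilon\},
\]
and observe from the hypothesis and the $(\epsilon,\delta)$-conformance assumption that $\text{Prob}(A)\ge 1-\bar{\delta}$ and $\text{Prob}(B)\ge 1-\delta$.

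Next I would invoke the consequence of Hölder continuity stated in equation \eqref{eq:sensible}: on every outcome $\omega \in A \cap B$, the realizations $y_1 := Y_1(\cdot,\omega)$ and $y_2 := Y_2(\cdot,\omega)$ satisfy $C(y_1)\ge c$ and $d(y_1,y_2)\le \epsilon$ simultaneously, hence $C(y_2) \ge c - H\epsilon^\gamma$. This yields the set inclusion
\[
A \cap B \;\subseteq\; \{\omega\in\Omega \mid C(Y_2(\cdot,\omega)) \ge c - H\epsilon^\gamma\},
\]
so it suffices to lower bound $\text{Prob}(A\cap B)$.

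For the final step I would use a Fréchet/Boole-type union bound on the complements: $\text{Prob}(A\cap B) = 1 - \text{Prob}(A^c \cup B^c) \ge 1 - \text{Prob}(A^c) - \text{Prob}(B^c) \ge 1 - \bar{\delta} - \delta$, and combine with the inclusion above to conclude
\[
\text{Prob}\big(C(Y_2)\ge c - H\epsilon^\gamma\big) \;\ge\; \text{Prob}(A\cap B) \;\ge\; 1 - \delta - \bar{\delta}.
\]

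Structurally this is a short argument, so the only potential obstacle is measurability: one needs $A$ and $B$ to be events in $\mathcal{F}$ (or an appropriate completion thereof) for the probabilities to be defined. Since the paper already assumes $S_1$, $S_2$, and $U$ are measurable so that $Y_1,Y_2$ are well-defined stochastic signals, and since $d$ and $C$ are used as measurable functionals throughout the paper, this is granted by the standing setup and I would simply cite those assumptions rather than belabor the point.
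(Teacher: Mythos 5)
Your proposal is correct and follows essentially the same route as the paper's proof: a union bound over the complements of the conformance event and the event $\{C(Y_1)\ge c\}$, followed by the pointwise implication from equation \eqref{eq:sensible} to transfer the bound to $C(Y_2)$. The added remark on measurability is a reasonable (and harmless) refinement of what the paper leaves implicit.
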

\begin{proof}
By assumption, it holds that $\text{Prob}(d(Y_1,Y_1)\le \epsilon)\ge 1-\delta$ and $\text{Prob}(C(Y_1)\ge c)\ge 1-\bar{\delta}$ so that we know that $\text{Prob}(d(Y_1,Y_1)> \epsilon)\le \delta$ and $\text{Prob}(C(Y_1)< c)\le \bar{\delta}$. We can now apply the union bound over these two events so that
\begin{align*}
    \text{Prob}\big(d(Y_1,Y_1)> \epsilon \; \vee \; C(Y_1)< c \big)\le \delta+\bar{\delta}.
\end{align*}
From here, we can simply see that 
\begin{align*}
    \text{Prob}\big(d(Y_1,Y_1)\le \epsilon \; \wedge \; C(Y_1)\ge c \big)\ge 1-\delta-\bar{\delta}.
\end{align*}
Since $C$ is Hölder continuous w.r.t. $d$, which implies that equation \eqref{eq:sensible} holds, it is  easy to conclude that $\text{Prob}(C(Y_2)\ge c-H\epsilon^\gamma)\ge 1-\delta-\bar{\delta}$.
\end{proof}

Theorem \ref{thm:1} tells us that i) $(\epsilon,\delta)$-conformance of systems $S_1$ and $S_2$ under $U$, and ii) Hölder continuity of the performance function $C$ w.r.t. the metric $d$ enables us to derive a probabilistic lower bound for the performance of system $S_2$ w.r.t. $C$ from the performance of system $S_1$. 

We can derive a transference result similar to Theorem \ref{thm:1} when we assume that the systems $S_1$ and $S_2$ are $(\epsilon,\delta)$-conformant in the sense of Definition \ref{def:2} instead of Definition \ref{def:1}.

%We provide the proof of the next result in Appendix  \ref{app:proof1}.
 \begin{theorem}\label{thm:1_}
Let the premises in Definitions \ref{def:2} and \ref{def:3} hold. Further, let the systems $S_1$ and $S_2$ be $(\epsilon,\delta)$-conformant so that equation \eqref{eq:conformance_} holds and let $C$ be Hölder continuous w.r.t. $d$ so that equation \eqref{eq:continuity} holds. Then, it holds that
 \begin{align*}
    \text{Prob}\big(\inf_{U\in\mathcal{U}}C(Y_1)\ge c\big)\ge 1-\bar{\delta} \;\; \Rightarrow \;\;   \text{Prob}\big(\inf_{U\in\mathcal{U}}C(Y_2)\ge c-H \epsilon^\gamma\big)\ge 1-\delta-\bar{\delta}
\end{align*}
\end{theorem}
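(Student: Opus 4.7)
The plan is to mimic the proof of Theorem~\ref{thm:1} but carefully handle the extra universal quantification over the input $U$ that appears in Definition~\ref{def:2}. First I would rewrite the two hypotheses as statements about complementary events: the $(\epsilon,\delta)$-conformance assumption from equation~\eqref{eq:conformance_} gives $\text{Prob}\bigl(\sup_{U\in\mathcal{U}} d(Y_1,Y_2) > \epsilon\bigr) \le \delta$, while the assumed lower bound on $S_1$'s performance gives $\text{Prob}\bigl(\inf_{U\in\mathcal{U}} C(Y_1) < c\bigr) \le \bar{\delta}$. A union bound then yields
\begin{align*}
    \text{Prob}\Bigl(\sup_{U\in\mathcal{U}} d(Y_1,Y_2) \le \epsilon \;\wedge\; \inf_{U\in\mathcal{U}} C(Y_1)\ge c\Bigr) \ge 1-\delta-\bar{\delta},
\end{align*}
exactly as in Theorem~\ref{thm:1}.

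The substantive step is to show that on the intersection event above, one automatically has $\inf_{U\in\mathcal{U}} C(Y_2)\ge c - H\epsilon^\gamma$. Fix any outcome $\omega$ in this event and any input $U\in\mathcal{U}$. Since the supremum of $d(Y_1,Y_2)$ over all inputs is at most $\epsilon$ (at this $\omega$), we have $d(Y_1(\cdot,\omega),Y_2(\cdot,\omega))\le \epsilon$ for this particular $U$; similarly, $C(Y_1(\cdot,\omega))\ge c$ because the infimum over all inputs is at least $c$. Hölder continuity via equation~\eqref{eq:sensible} then gives $C(Y_2(\cdot,\omega)) \ge c - H\epsilon^\gamma$ for this $U$. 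Since $U$ was arbitrary, taking the infimum over $U\in\mathcal{U}$ preserves the bound, yielding $\inf_{U\in\mathcal{U}} C(Y_2(\cdot,\omega))\ge c - H\epsilon^\gamma$. Combining with the probability bound from the union step completes the argument.

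The only subtle point, and what I would call out explicitly in the writeup, is the direction in which the quantifiers cooperate: we need a uniform (in $U$) upper bound on the distance together with a uniform lower bound on $C(Y_1)$ in order to apply equation~\eqref{eq:sensible} pointwise in $U$ and then pass the resulting lower bound on $C(Y_2)$ back through an infimum over $U$. This is precisely why Definition~\ref{def:2} pairs $\sup_{U} d(Y_1,Y_2)$ with $\inf_{U} C(Y_i)$, and there is no real obstacle beyond being careful to track the direction of each inequality. No additional structural assumptions (e.g., measurability of the supremum, compactness of $\mathcal{U}$) are needed beyond those implicit in Definition~\ref{def:2}, since the argument is pointwise in $\omega$.
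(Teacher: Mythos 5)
Your proof is correct and follows essentially the same route as the paper: recast both hypotheses as complements, apply a union bound to get $\text{Prob}\bigl(\sup_{U}d(Y_1,Y_2)\le\epsilon \wedge \inf_{U}C(Y_1)\ge c\bigr)\ge 1-\delta-\bar\delta$, and then invoke H\"older continuity via equation~\eqref{eq:sensible}. If anything, your final step is handled more carefully than the paper's: the paper detours through the per-input statement ``for each $U$, $\text{Prob}(d\le\epsilon \wedge C(Y_1)\ge c)\ge 1-\delta-\bar\delta$,'' which on its own does not yield a simultaneous bound over all $U$, whereas your pointwise-in-$\omega$ argument on the single intersection event is exactly what makes the passage to $\inf_{U}C(Y_2)$ airtight.
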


\begin{proof}
By assumption, it holds that $\text{Prob}(\sup_{U\in\mathcal{U}}d(Y_1,Y_1)\le \epsilon)\ge 1-\delta$ and $\text{Prob}(\inf_{U\in\mathcal{U}}C(Y_1)\ge c)\ge 1-\bar{\delta}$ so that we know that $\text{Prob}(\sup_{U\in\mathcal{U}}d(Y_1,Y_1)> \epsilon)\le \delta$ and $\text{Prob}(\inf_{U\in\mathcal{U}}C(Y_1)< c)\le \bar{\delta}$. We can now apply the union bound over these two events so that
\begin{align*}
    \text{Prob}\big(\sup_{U\in\mathcal{U}}d(Y_1,Y_1)> \epsilon \; \vee \; \inf_{U\in\mathcal{U}}C(Y_1)< c \big)\le \delta+\bar{\delta}.
\end{align*}
From here, we can simply see that 
\begin{align*}
    \text{Prob}\big(\sup_{U\in\mathcal{U}}d(Y_1,Y_1)\le \epsilon \; \wedge \; \inf_{U\in\mathcal{U}}C(Y_1)\ge c \big)\ge 1-\delta-\bar{\delta}.
\end{align*}
This equation tells us that, for each $U\in\mathcal{U}$, we have 
\begin{align*}
    \text{Prob}\big(d(Y_1,Y_1)\le \epsilon \; \wedge \; C(Y_1)\ge c \big)\ge 1-\delta-\bar{\delta}.
\end{align*}
Since $C$ is Hölder continuous w.r.t. $d$, we know that equation \eqref{eq:continuity} holds for each $U\in\mathcal{U}$. Consequently, we can conclude that $\text{Prob}(\inf_{U\in\mathcal{U}}C(Y_2)\ge c-H \epsilon^\gamma)\ge 1-\delta-\bar{\delta}$ .
\end{proof}

\subsection{Transference under non-conformance risk} 
On the other hand, by considering the notion of $r$-non-conformance risk, we expect that two systems $S_1$ and $S_2$ that are not at risk of being $r$-non-conformant have a similar risk of violating a specification. Here, we define the risk of violating a specifications by following ideas from \cite{lindemann2022risk} as $R(-C(Y_1))$ and $R(-C(Y_2))$.
%, see proof in Appendix \ref{app:proof00}
\begin{theorem}\label{thm:2}
Let the premises in Definitions \ref{def:1} and \ref{def:3} hold. Further, let the systems $S_1$ and $S_2$ under the input $U$  not be at risk of being $r$-non-conformant  so that equation \eqref{eq:risk_conformance} does not hold (i.e., $R(d(Y_1,Y_2))\le r$) and let $C$ be Hölder continuous w.r.t. $d$ with $\gamma=1$ so that equation \eqref{eq:continuity} holds. If the risk measure $R$ is monotone, positive homogeneous, and subadditive, it holds that
\begin{align*}
  R(-C(Y_2))\le R(-C(Y_1))+Hr.
\end{align*}
\end{theorem}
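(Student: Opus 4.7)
The plan is to establish a pointwise inequality between the random variables $-C(Y_2)$ and $-C(Y_1) + H d(Y_1,Y_2)$, and then to lift this inequality through the risk measure $R$ using its three assumed properties in sequence: monotonicity, subadditivity, and positive homogeneity.

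First, I would use Hölder continuity with $\gamma=1$ in the form $C(y_1)-C(y_2)\le |C(y_1)-C(y_2)|\le H\,d(y_1,y_2)$. Applied pointwise to the realizations $Y_1(\cdot,\omega)$ and $Y_2(\cdot,\omega)$ for each $\omega\in\Omega$, this yields
\begin{align*}
-C(Y_2)(\omega)\;\le\;-C(Y_1)(\omega)+H\,d(Y_1,Y_2)(\omega)\qquad\text{for all }\omega\in\Omega.
\end{align*}
This is the only place where the restriction $\gamma=1$ really matters, because for a general exponent the right-hand side would be $H\,d(Y_1,Y_2)^\gamma$, which is not positively homogeneous of degree one in $d(Y_1,Y_2)$ and would obstruct the final step.

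Next, I would apply the three properties of $R$ in turn. Monotonicity converts the pointwise bound into $R(-C(Y_2))\le R(-C(Y_1)+H\,d(Y_1,Y_2))$. Subadditivity then splits the right-hand side as $R(-C(Y_1)+H\,d(Y_1,Y_2))\le R(-C(Y_1))+R(H\,d(Y_1,Y_2))$. Finally, positive homogeneity (using $H\ge 0$, which is guaranteed by Definition \ref{def:3}) gives $R(H\,d(Y_1,Y_2))=H\,R(d(Y_1,Y_2))$, and the hypothesis $R(d(Y_1,Y_2))\le r$ (i.e., that $S_1,S_2$ are \emph{not} at risk of being $r$-non-conformant in the sense of equation \eqref{eq:risk_conformance}) yields $H\,R(d(Y_1,Y_2))\le Hr$. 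Chaining these inequalities produces $R(-C(Y_2))\le R(-C(Y_1))+Hr$.

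I do not expect a significant obstacle in this proof: the argument is essentially algebraic once the pointwise Hölder bound is in place. The only subtlety worth flagging is to make sure that $-C(Y_1)+H\,d(Y_1,Y_2)$ and its summands are valid inputs to $R$ (i.e., measurable real-valued random variables on $(\Omega,\mathcal{F},P)$), which follows from the standing measurability assumptions on $S_1$, $S_2$, $U$, and the signal metric $d$, together with the continuity of $C$. The restriction to $\gamma=1$ is essential for the homogeneity step and could, in principle, be relaxed only if one were willing to replace positive homogeneity by a suitable $\gamma$-homogeneity assumption on $R$, which neither $VaR_\beta$ nor $CVaR_\beta$ satisfies in general.
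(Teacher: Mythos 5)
Your proof is correct and follows essentially the same route as the paper's: a pointwise Hölder bound $-C(Y_2)\le -C(Y_1)+H\,d(Y_1,Y_2)$ lifted through $R$ via monotonicity, then subadditivity, then positive homogeneity, and finally the hypothesis $R(d(Y_1,Y_2))\le r$. Your added remarks on why $\gamma=1$ is needed and on measurability are sensible but do not change the argument.
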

\begin{proof}
We can derive the following chain of inequalities
\begin{align*}
    R(-C(Y_2)) &\overset{(a)}{\le} R(-C(Y_1)+Hd(Y_1,Y_1)) \\ & \overset{(b)} {\le} 
    R(-C(Y_1))+R(Hd(Y_1,Y_1))\\
    & \overset{(c)}{=}R(-C(Y_1))+HR(d(Y_1,Y_1)) \\
    &\overset{(d)}{\le} R(-C(Y_1))+Hr
\end{align*}
where $(a)$ follows since $C$ is Hölder continuous w.r.t. $d$ and since $R$ is monotone, $(b)$ follows since $R$ is subadditive, and $(c)$ follows since $R$ is positive homogeneous, while the inequality $(d)$ follows since  $S_1$ and $S_2$ under $U$ are not at risk of being $r$-non-conformant, i.e., $R(d(Y_1,Y_2))\le r$.
\end{proof}

This result implies that the risk of system $S_2$ w.r.t. the performance function $C$ is upper bounded by the risk of system $S_1$ w.r.t. $C$ if the systems $S_1$ and $S_2$ are not at risk of being $r$-non-conformant. We remark that a similar result  appeared in our prior work \cite{cleaveland2022risk}. Here, we present these results in the more general context of conformance and extend the result as we use general performance functions $C$, which additionally requires $R$ to be positive homogeneous. Additionally, we derive a transference result similar to Theorem \ref{thm:2} when we assume that the systems $S_1$ and $S_2$ are not at risk of being $r$-non-conformant in the sense of Definition \ref{def:2} instead of Definition \ref{def:1}. 

%We provide the proof of the next result in Appendix \ref{app:proof2}. 

\begin{theorem}\label{thm:2_}
Let the premises in Definitions \ref{def:2} and \ref{def:3} hold. Further, let the systems $S_1$ and $S_2$ not be at risk of being $r$-non-conformant  so that equation \eqref{eq:risk_conformance_} does not hold (i.e., $R(\sup_{U\in\mathcal{U}}d(Y_1,Y_2))\le r$) and let $C$ be Hölder continuous w.r.t. $d$ with $\gamma=1$ so that equation \eqref{eq:continuity} holds. If the risk measure $R$ is monotone, positive homogeneous, and subadditive, it holds that
\begin{align*}
  R(-\inf_{U\in\mathcal{U}}C(Y_2))\le R(-\inf_{U\in\mathcal{U}}C(Y_1))+Hr.
\end{align*}
\end{theorem}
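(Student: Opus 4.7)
The plan is to mirror the proof of Theorem \ref{thm:2} but with the extra care needed to push the $\sup_{U\in\mathcal{U}}$ through the Hölder bound before invoking the properties of $R$. The central observation is that the identity $\sup_{U\in\mathcal{U}}(-C(Y_1))=-\inf_{U\in\mathcal{U}} C(Y_1)$ (and likewise for $Y_2$) lets me turn the infima in the statement into suprema, which play nicely with the Hölder upper bound and with monotonicity of $R$.

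First, I would fix an outcome $\omega\in\Omega$ and an input $U\in\mathcal{U}$ and use Hölder continuity (with $\gamma=1$) in the form $-C(Y_2(\cdot,\omega))\le -C(Y_1(\cdot,\omega))+H\,d(Y_1(\cdot,\omega),Y_2(\cdot,\omega))$, exactly as in step $(a)$ of the proof of Theorem \ref{thm:2}. Next, taking the supremum over $U\in\mathcal{U}$ on both sides and using the sublinearity of $\sup$ on the right-hand side, I get
\begin{align*}
\sup_{U\in\mathcal{U}}\bigl(-C(Y_2)\bigr) \;\le\; \sup_{U\in\mathcal{U}}\bigl(-C(Y_1)\bigr) + H\sup_{U\in\mathcal{U}} d(Y_1,Y_2),
\end{align*}
which is a pointwise (in $\omega$) inequality between random variables. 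Rewriting the left side as $-\inf_{U\in\mathcal{U}} C(Y_2)$ and the first term on the right as $-\inf_{U\in\mathcal{U}} C(Y_1)$ sets up the random-variable inequality in exactly the form $R$ can process.

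Now I apply the three properties of $R$ in sequence, as in Theorem \ref{thm:2}: monotonicity gives $R(-\inf_U C(Y_2))\le R(-\inf_U C(Y_1)+H\sup_U d(Y_1,Y_2))$; subadditivity splits this as $R(-\inf_U C(Y_1))+R(H\sup_U d(Y_1,Y_2))$; positive homogeneity pulls out the constant $H$; and the assumption that $S_1,S_2$ are not at risk of being $r$-non-conformant under Definition \ref{def:2}, i.e., $R(\sup_U d(Y_1,Y_2))\le r$, yields the claimed bound $R(-\inf_U C(Y_2))\le R(-\inf_U C(Y_1))+Hr$.

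The main subtlety I expect is the first step, namely justifying that the pointwise-in-$U$ Hölder inequality remains valid after taking $\sup_U$, since moving $\sup$ across a difference generally only gives a one-sided estimate. This is fine here because the inequality runs in the right direction: $\sup_U(-C(Y_2))$ is bounded above by $\sup_U$ of a sum, which in turn is bounded above by the sum of $\sup_U$'s. Everything else is a routine application of the same monotonicity, subadditivity, and positive homogeneity chain already used in Theorem \ref{thm:2}, so no new analytic ingredients are required.
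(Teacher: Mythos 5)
Your proposal is correct and follows essentially the same route as the paper: the paper's step $(a)$ compresses exactly your argument, namely rewriting $-\inf_U C(Y_i)$ as $\sup_U(-C(Y_i))$, applying the pointwise Hölder bound, splitting the supremum of the sum, and then invoking monotonicity of $R$, after which subadditivity, positive homogeneity, and the non-conformance hypothesis finish the chain identically. The only difference is that you spell out the $\sup_U$ manipulation that the paper leaves implicit, which is a fair clarification but not a different proof.
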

\begin{proof}
We can derive the following chain of inequalities
\begin{align*}
    R(-\inf_{U\in\mathcal{U}}C(Y_2)) &\overset{(a)}{\le} R(-\inf_{U\in\mathcal{U}}C(Y_1)+H\sup_{U\in\mathcal{U}}d(Y_1,Y_1))\\ &\overset{(b)}{\le} R(-\inf_{U\in\mathcal{U}}C(Y_1))+R(H\sup_{U\in\mathcal{U}}d(Y_1,Y_1))\\
    & \overset{(c)}{=}R(-\inf_{U\in\mathcal{U}}C(Y_1))+HR(\sup_{U\in\mathcal{U}}d(Y_1,Y_1))\\
    &\overset{(d)}{\le} R(-\inf_{U\in\mathcal{U}}C(Y_1))+Hr
\end{align*}
where $(a)$ follows since $-\inf_{U\in\mathcal{U}}C(Y_2)=\sup_{U\in\mathcal{U}}-C(Y_2)$, since $C$ is Hölder continuous w.r.t. $d$, and since $R$ is monotone, $(b)$ follows since $R$ is subadditive, and $(c)$ follows since $R$ is positive homogeneous. The inequality $(d)$ follows since  $S_1$ and $S_2$ are not at risk of being $r$-non-conformant, i.e., $\sup_{U\in\mathcal{U}}R(d(Y_1,Y_2))\le r$.
\end{proof}

\section{Statistical Estimation of Stochastic Conformance}
\label{sec:statest}
% We will next discuss how to estimate stochastic conformance and the non-conformance risk from Definitions \ref{def:1} and \ref{def:2}. 
We  propose algorithms to compute stochastic conformance and the non-conformance risk. In practice, note that one will be limited to discrete-time stochastic systems to apply these algorithms.

\subsection{Estimating stochastic conformance} To estimate stochastic conformance, we use conformal prediction which is a statistical tool introduced  in \cite{vovk2005algorithmic,shafer2008tutorial} to obtain valid uncertainty regions for complex prediction models without making assumptions on the underlying distribution or the prediction model \cite{angelopoulos2021gentle,zeni2020conformal,lei2018distribution,tibshirani2019conformal,cauchois2020robust}.  Let $Z,Z^{(1)},\hdots,Z^{(k)}$ be $k+1$ independent and identically distributed random variables modeling a quantity
known as the \emph{nonconformity score}. Our goal is to obtain an uncertainty region for $Z$ based on $Z^{(1)},\hdots,Z^{(k)}$, i.e., the random variable $Z$ should be contained within the uncertainty region with high probability.  Formally, given a failure probability $\delta\in (0,1)$, we want to construct a valid uncertainty region over $Z$ (defined in terms of a value $\bar{Z}$) that depends on $Z^{(1)},\hdots,Z^{(k)}$ such that 
%\begin{align*}
 \(   \text{Prob}(Z\le \bar{Z})\ge 1-\delta \).
%\end{align*}

By a surprisingly simple quantile argument, see \cite[Lemma 1]{tibshirani2019conformal}, one can obtain $\bar{Z}$ to be the $(1-\delta)$th quantile of the empirical distribution of the values $Z^{(1)},\hdots,Z^{(k)}$ and $\infty$. By assuming that $Z^{(1)},\hdots,Z^{(k)}$ are sorted in non-decreasing order, and by adding $Z^{(k+1)}:=\infty$, we can equivalently obtain $\bar{Z}:=Z^{(p)}$ where $p:=\lceil (k+1)(1-\delta)\rceil$ with $\lceil\cdot\rceil$ being the ceiling function.

We can now use conformal prediction to estimate stochastic conformance as defined in Definition \ref{def:1} by setting $Z:=d(Y_1,Y_2)$. We therefore assume that we have access to a calibration dataset $D_\text{cal}$ that consists of realizations $y_1^{(i)}$ and $y_2^{(i)}$ from the stochastic signals $Y_1\sim \mathcal{D}_1$ and $Y_2\sim\mathcal{D}_2$, respectively. 
\begin{theorem}\label{thm:5}
 Let the premises of Definition \ref{def:1} hold and $D_\text{cal}$ be a calibration dataset with datapoints $(y_1^{(i)},y_2^{(i)})$ drawn from $\mathcal{D}_1\times\mathcal{D}_2$. Further, define $Z^{(i)}:=d(y_1^{(i)},y_2^{(i)})$ for all $i\in\{1,\hdots,|D_\text{cal}|\}$ and $Z^{(|D_\text{cal}|+1)}:=\infty$, and assume that the $Z^{(i)}$ are sorted in non-decreasing order. Then, it holds that  $\text{Prob}(d(Y_1,Y_2)\le \bar{Z})\ge 1-\delta$ with $\bar{Z}$ defined as $\bar{Z}:=Z^{(p)}$ where $p:=\big\lceil (|D_\text{cal}|+1)(1-\delta)\big\rceil$. Thus, the systems $S_1$ and $S_2$ under the input $U$ are $(\epsilon,\delta)$-conformant if $\bar{Z}\le \epsilon$.
\end{theorem}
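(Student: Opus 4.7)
The plan is to reduce the statement directly to the standard conformal prediction quantile lemma already invoked in the paragraph preceding the theorem, applied to the nonconformity score $Z := d(Y_1, Y_2)$. The key observation is that by the problem setup, the calibration pairs $(y_1^{(i)}, y_2^{(i)})$ are drawn i.i.d.\ from $\mathcal{D}_1 \times \mathcal{D}_2$, and the target pair $(Y_1, Y_2)$ is drawn from the same joint distribution. Since $d:\mathcal{Y}\times\mathcal{Y}\to\mathbb{R}$ is a (measurable) deterministic function, the resulting scores $Z, Z^{(1)}, \ldots, Z^{(|D_\text{cal}|)}$ are i.i.d.\ real-valued random variables, which is exactly the exchangeability premise needed for the conformal quantile argument.

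First, I would state this i.i.d.\ reduction explicitly, so the random variable $Z$ whose tail we want to control is of the same type as the $Z^{(i)}$. Second, I would invoke \cite[Lemma 1]{tibshirani2019conformal} (equivalently, the quantile argument recalled in the paragraph above the theorem statement): if we append $Z^{(|D_\text{cal}|+1)} := \infty$ and sort in non-decreasing order, then for any $\delta \in (0,1)$, the $\lceil(|D_\text{cal}|+1)(1-\delta)\rceil$-th order statistic $\bar{Z} := Z^{(p)}$ satisfies $\text{Prob}(Z \le \bar{Z}) \ge 1 - \delta$. Substituting $Z = d(Y_1, Y_2)$ yields precisely the desired bound $\text{Prob}(d(Y_1, Y_2) \le \bar{Z}) \ge 1 - \delta$.

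Finally, to get the $(\epsilon,\delta)$-conformance conclusion, I would apply monotonicity of probability: if $\bar{Z} \le \epsilon$, then the event $\{d(Y_1,Y_2) \le \bar{Z}\}$ is contained in the event $\{d(Y_1,Y_2) \le \epsilon\}$, so
\begin{align*}
\text{Prob}(d(Y_1,Y_2) \le \epsilon) \ge \text{Prob}(d(Y_1,Y_2) \le \bar{Z}) \ge 1 - \delta,
\end{align*}
which matches equation \eqref{eq:conformance} in Definition \ref{def:1}.

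I do not anticipate a genuine obstacle, since the theorem is a direct instantiation of a cited result; the only subtlety worth flagging is the need for measurability of $d$ and the i.i.d.\ (rather than merely exchangeable) status of the score sequence, both of which follow immediately from the assumptions in Definition \ref{def:1} and the sampling model for $D_\text{cal}$. A secondary care-point is handling the $\infty$ append: it guarantees that $p \le |D_\text{cal}|+1$ so that $Z^{(p)}$ is always well-defined even when $\lceil (|D_\text{cal}|+1)(1-\delta)\rceil = |D_\text{cal}|+1$, though in that degenerate regime $\bar{Z} = \infty$ and the bound is vacuous.
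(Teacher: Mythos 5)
Your proposal is correct and follows essentially the same route as the paper, which gives no separate proof for this theorem and instead treats it as a direct instantiation of the conformal prediction quantile argument from \cite[Lemma 1]{tibshirani2019conformal} recalled in the preceding paragraph, with nonconformity score $Z:=d(Y_1,Y_2)$. Your added remarks on the i.i.d.\ status of the scores, measurability of $d$, and the monotonicity step for the $(\epsilon,\delta)$-conformance conclusion are exactly the details the paper leaves implicit.
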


We see that checking stochastic conformance as defined in Definition \ref{def:1} is computationally simple when we have a calibration dataset $D_\text{cal}$. Checking stochastic conformance as defined in Definition~\ref{def:2}, however, is  more difficult due to the existence of the $\sup$-operator. To compute this notion of conformance, we make two assumptions: i) the set $\mathcal{U}$ is compact, and ii) for every realization $\omega\in\Omega$, the function $d(Y_1(\cdot,\omega),Y_2(\cdot,\omega))$ is Lipschitz continuous with Lipschitz constant $L$. While knowledge of the Lipschitz constant $L$ would presume knowledge about the closeness of the systems $S_1$ and $S_2$, it would only provide a conservative over-approximation. We will, however, not need to know the Lipschitz constant $L$ and estimate $L$ instead along with probabilistic guarantees.

Our approach is summarized in Algorithms \ref{alg:overview} and \ref{alg:overview_}. Algorithm \ref{alg:overview} computes $\bar{Z}$ such that $\text{Prob}(\sup_{U\in\mathcal{U}}d(Y_1,Y_2)\le \bar{Z}+L\kappa)\ge 1-\delta$ when $L$ is known and where $\kappa$ is a gridding parameter, while Algorithm \ref{alg:overview_} estimates the Lipschitz constant. We present a description of these algorithms upfront and state their theoretical guarantees afterwards.

In line 1 of Algorithm \ref{alg:overview}, we  construct a $\kappa$-net $\bar{\mathcal{U}}$ of $\mathcal{U}$, i.e., we construct a finite set $\bar{\mathcal{U}}$  so that for each $U\in\mathcal{U}$ there exists a $\bar{U}\in \bar{\mathcal{U}}$ such that $\bar{d}(U,\bar{U})\le \kappa$ where $\bar{d}:\mathcal{U}\times\mathcal{U}\to\mathbb{R}$ is a  metric. For this purpose, simple gridding strategies can be used as long as the set $\mathcal{U}$ has a convenient representation. Alternatively, randomized algorithms can be used that sample from $\mathcal{U}$ \cite{vershynin2018high}. In lines 2-4, we apply Theorem \ref{thm:5} for each element $\bar{U}\in\bar{\mathcal{U}}$. Therefore, we obtain realizations $(y_1^{(i)},y_2^{(i)})$ from $\mathcal{D}_1\times\mathcal{D}_2$ under $\bar{U}$ (line 3). We then compute $\bar{Z}_{\bar{U}}$ so that $\text{Prob}(d(Y_1(\bar{U},\cdot),Y_2(\bar{U},\cdot))\le \bar{Z}_{\bar{U}})\ge 1-\delta$ (line 4). Finally, we set $\bar{Z}:=\max_{\bar{U}\in\mathcal{\bar{U}}}\bar{Z}_{\bar{U}}$ (line 5).

In Algorithm \ref{alg:overview_}, we compute $\bar{L}$ such that $\text{Prob}(L\le \bar{L})\ge1-\delta_L$. We uniformly sample control inputs $(U',U'')$ (line 2), obtain realizations $(y_1',y_2')$ from $\mathcal{D}_1\times\mathcal{D}_2$  under $U'$ and realizations $({y}_1'',{y}_2'')$ from $\mathcal{D}_1\times\mathcal{D}_2$ under $U''$ (line 3), and compute the non-conformity score $L^{(i)}$ (line 4). In line 5, we obtain an estimate $\bar{L}$ of the Lipschitz constant $L$ that holds with a probability of $1-\delta_L$ over the randomness introduced in Algorithm \ref{alg:overview}. % The proof of the next result is in Appendix \ref{app:proof6}.
\begin{theorem}\label{th:supU}
 Let the premises of Definition \ref{def:2} hold. If the Lipschitz constant $L$ of $d(Y_1(\cdot,\omega),Y_2(\cdot,\omega))$ is known uniformly over $\omega\in\Omega$, then, for a gridding parameter $\kappa>0$, the output $\bar{Z}$ of Algorithm \ref{alg:overview} ensures that
  \begin{align*}
     \text{Prob}(\sup_{U\in\mathcal{U}}d(Y_1,Y_2)\le \bar{Z}+L\kappa)\ge 1-\delta
 \end{align*}
Thus, the systems $S_1$ and $S_2$ are $(\epsilon,\delta)$-conformant if $\bar{Z}+L\kappa\le \epsilon$. Otherwise, let $\delta_L\in(0,1)$ be a failure probability, then the output $\bar{L}$ of Algorithm \ref{alg:overview_} ensures that
 \begin{align*}
     \text{Prob}(\sup_{U\in\mathcal{U}}d(Y_1,Y_2)\le \bar{Z}+\bar{L}\kappa)\ge 1-\delta-\delta_L
 \end{align*}
 where $\text{Prob}$ is defined over the randomness introduced in Algorithm \ref{alg:overview_}. 
\end{theorem}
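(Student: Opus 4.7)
The plan is to combine a Lipschitz-based $\kappa$-net reduction with the finite-sample bound of Theorem \ref{thm:5}, and then to close the loop with a second layer of conformal prediction in order to handle the unknown Lipschitz constant. First, I would argue path-wise for fixed $\omega\in\Omega$: by construction of the $\kappa$-net $\bar{\mathcal{U}}$ in line 1 of Algorithm \ref{alg:overview}, each $U\in\mathcal{U}$ admits some $\bar{U}\in\bar{\mathcal{U}}$ with $\bar{d}(U,\bar{U})\le \kappa$; the uniform Lipschitz hypothesis then gives
\begin{align*}
d(Y_1(U,\omega),Y_2(U,\omega)) &\le d(Y_1(\bar{U},\omega),Y_2(\bar{U},\omega))+L\kappa \\
&\le \max_{\bar{U}'\in\bar{\mathcal{U}}}d(Y_1(\bar{U}',\omega),Y_2(\bar{U}',\omega))+L\kappa.
\end{align*}
Taking $\sup_{U\in\mathcal{U}}$ on the left (for each fixed $\omega$ this is a purely deterministic operation) collapses the intractable supremum to the finite maximum plus an $L\kappa$ slack.

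Next I would apply Theorem \ref{thm:5} once for each of the $N:=|\bar{\mathcal{U}}|$ grid points with an internal failure probability $\delta/N$, producing calibration quantiles $\bar{Z}_{\bar{U}}$ that satisfy $\text{Prob}(d(Y_1(\bar{U},\cdot),Y_2(\bar{U},\cdot))\le \bar{Z}_{\bar{U}})\ge 1-\delta/N$. A union bound over $\bar{\mathcal{U}}$ then shows that, with probability at least $1-\delta$, every grid-wise bound holds simultaneously, so $\max_{\bar{U}\in\bar{\mathcal{U}}}d(Y_1(\bar{U},\cdot),Y_2(\bar{U},\cdot))\le \max_{\bar{U}\in\bar{\mathcal{U}}}\bar{Z}_{\bar{U}}=\bar{Z}$. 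Chaining this with the path-wise inequality above yields $\text{Prob}\bigl(\sup_{U\in\mathcal{U}}d(Y_1,Y_2)\le \bar{Z}+L\kappa\bigr)\ge 1-\delta$, which is the first assertion. For the unknown-$L$ case, I would view Algorithm \ref{alg:overview_} as a second, independent application of conformal prediction: the scores $L^{(i)}$ are exchangeable empirical upper bounds on the local Lipschitz ratio and, by the quantile argument underlying Theorem \ref{thm:5}, the $(1-\delta_L)$-quantile $\bar{L}$ satisfies $\text{Prob}(L\le \bar{L})\ge 1-\delta_L$. A final union bound over the two (independent) failure events -- the conformance inequality and the Lipschitz estimate -- then gives $\text{Prob}\bigl(\sup_{U\in\mathcal{U}}d(Y_1,Y_2)\le \bar{Z}+\bar{L}\kappa\bigr)\ge 1-\delta-\delta_L$.

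The main obstacle is justifying that the score $L^{(i)}$ in Algorithm \ref{alg:overview_} genuinely controls the \emph{global} Lipschitz constant $L$ used in the first part, since pointwise conformal prediction only controls one random quantity at a time. The cleanest route is to assume $\bar{d}(U',U'')$-normalized slopes between random input pairs are exchangeable with the slope realized along the worst-case grid segment, so that the $(1-\delta_L)$-quantile dominates the realized ratio in the inequality $\sup_{U} d\le \bar{Z}+\bar{L}\kappa$. Once this exchangeability/compactness bookkeeping is in place, the $\kappa$-net reduction, the union bound over the grid, and the final union bound combining the two conformal layers are otherwise routine and deliver both conclusions of the theorem.
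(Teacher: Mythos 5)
Your argument for the first claim is more careful than the paper's but, strictly speaking, proves a statement about a \emph{modified} algorithm. You allocate failure probability $\delta/N$ to each of the $N=|\bar{\mathcal{U}}|$ grid points and then union bound to get simultaneous coverage of all the grid-wise quantiles, which is the logically clean way to pass from per-grid-point bounds to a bound on $\sup_{U\in\mathcal{U}}d(Y_1,Y_2)$ --- an event that requires \emph{all} of the per-$U$ events to hold at once. Algorithm \ref{alg:overview} as written, however, applies Theorem \ref{thm:5} with the full $\delta$ at every grid point, and the paper's proof passes directly from ``$\text{Prob}(d(Y_1(\bar{U},\cdot),Y_2(\bar{U},\cdot))\le\bar{Z}_{\bar{U}})\ge 1-\delta$ for each $\bar{U}$'' to the supremum statement with no union bound over $\bar{\mathcal{U}}$. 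So your route buys rigor at the price of a $\delta/N$ calibration the stated algorithm does not perform; if you keep the algorithm as given, your own union-bound reasoning only yields $1-N\delta$, not $1-\delta$. Your deterministic, path-wise Lipschitz reduction (fix $\omega$, compare $U$ to its nearest net point, then take the supremum) is the same mechanism the paper uses, just spelled out more explicitly.

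For the second claim, you and the paper both hinge on $\text{Prob}(L\le\bar{L})\ge 1-\delta_L$ followed by a union bound with the conformance event, and you are right to single out the justification of that inequality as the real obstacle: the scores $L^{(i)}=|d(y_1',y_2')-d(y_1'',y_2'')|/\bar{d}(U',U'')$ are slopes between uniformly sampled input pairs, so a conformal quantile of them only controls a \emph{fresh random slope}, not the global Lipschitz constant $L$, which is the supremum over all such slopes. The paper simply asserts the inequality from line 5 of Algorithm \ref{alg:overview_} without addressing this, so your explicit flagging of the needed exchangeability assumption is an improvement in transparency, even though neither your proposal nor the paper actually closes that gap.
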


\begin{algorithm}[t]
    \centering
    \begin{algorithmic}[1]
        \Statex \textbf{Input: } Failure probability $\delta\in(0,1)$ and grid size $\kappa>0$
        \Statex \textbf{Output: } $\bar{Z}$ such that $\text{Prob}(\sup_{U\in\mathcal{U}}d(Y_1,Y_2)\le \bar{Z}+L\kappa)\ge 1-\delta$
        \State Construct $\kappa$-net $\bar{\mathcal{U}}$ of $\mathcal{U}$
        \FOR{$\bar{U}\in\bar{\mathcal{U}}$}
            \State Obtain calibration set $D_\text{cal}^{\bar{U}}$ consisting of realizations $(y_1^{(i)},y_2^{(i)})$ under $\bar{U}$
            \State Compute $\bar{Z}_{\bar{U}}:=Z^{(p)}$ by applying Theorem \ref{thm:5} but instead using dataset $D_\text{cal}^{\bar{U}}$
        \ENDFOR
        \State $\bar{Z}:=\max_{\bar{U}\in\mathcal{\bar{U}}}\bar{Z}_{\bar{U}}$
    \end{algorithmic}
    \caption{Conformance Estimation as per Definition \ref{def:2}}
    \label{alg:overview}
\end{algorithm}

\begin{algorithm}[t]
    \centering
    \begin{algorithmic}[1]
        \Statex \textbf{Input: } Failure probabilities $\delta_L\in(0,1)$, grid size $\kappa>0$, calibration size $K_L>0$
        \Statex \textbf{Output: } $\bar{L}$ such that $\text{Prob}(\sup_{U\in\mathcal{U}}d(Y_1,Y_2)\le \bar{Z}+\bar{L}\kappa)\ge 1-\delta-\delta_L$
        \FOR{$i$ from $1$ to $K_L$}
        \State Sample $(U',U'')$ uniformly from $\mathcal{U}\times \mathcal{U}$
        \State Obtain realizations $(y_1',y_2')$ under $U'$ and $(y_1'',y_2'')$ under $U''$
        \State Compute $L^{(i)}:={|d(y_1',y_2')-d(y_1'',y_2'')|}/{\bar{d}(U',U'')}$
        \ENDFOR
        \State Compute $\bar{L} := L^{(p)}$ where $p:=\big\lceil (K_L+1)(1-\delta'')\big\rceil$
    \end{algorithmic}
    \caption{Lipschitz Constant Estimation of $L$}
    \label{alg:overview_}
\end{algorithm}

 \begin{proof}
From line 4 of Algorithm \ref{alg:overview} we know that $\text{Prob}(d(Y_1(\bar{U},\cdot),Y_2(\bar{U},\cdot))\le \bar{Z}_{\bar{U}})\ge 1-\delta$ for each $\bar{U}\in\mathcal{U}$. Due to Lipschitz continuity, we can conclude that for each $U\in\mathcal{U}$ that is such that $\bar{d}(U,\bar{U})\le \kappa$ it holds that
\begin{align*}
    \text{Prob}(d(Y_1,Y_2)\le \bar{Z}_{\bar{U}}+{L}\kappa)\ge 1-\delta.
\end{align*} 
Since $\bar{\mathcal{U}}$ is a $\kappa$-net of $\mathcal{U}$, it follows that $\text{Prob}(\sup_{U\in\mathcal{U}}d(Y_1,Y_2)\le \bar{Z}+L\kappa)\ge 1-\delta$. 

For the second part of the proof, note that from line 5 of Algorithm \ref{alg:overview_} we know that $\text{Prob}(L\le \bar{L})\ge 1-\delta_L$. We can now union bound over this event and $\text{Prob}(d(Y_1(\bar{U},\cdot),Y_2(\bar{U},\cdot))\le \bar{Z}_{\bar{U}})\ge 1-\delta$  so that 
\begin{align*}
    \text{Prob}(d(Y_1(\bar{U},\cdot),Y_2(\bar{U},\cdot))\le \bar{Z}_{\bar{U}} \;\; \wedge \;\; L\le \bar{L})\ge 1-\delta-\delta_L.
\end{align*}
The rest of the proof follows as in the first part.
\end{proof}

\subsection{Estimating non-conformance risk}  We next briefly summarize how to estimate the value-at-risk and the conditional value-at-risk following standard results such as from \cite{lindemann2022risk,massart1990tight} and \cite{wang2010deviation}, respectively.
\begin{proposition}
     \sloppy Let the premises of Definition \ref{def:1} hold and $D_\text{cal}$ be a calibration dataset with datapoints $(y_1^{(i)},y_2^{(i)})$ drawn from $\mathcal{D}_1\times\mathcal{D}_2$. Let $\beta\in(0,1)$ be a risk level and $\gamma\in(0,1)$ be a failure threshold. Define $Z^{(i)}:=d(y_1^{(i)},y_2^{(i)})$ for each $i\in\{1,\hdots,|D_\text{cal}|\}$ and assume that $\text{Prob}(Z\le\alpha)$ is continuous in $\alpha$. Then, 
	\begin{align*}
	\text{Prob}\big(\underline{VaR}_\beta\le VaR_\beta(d(Y_1,Y_2))\le \overline{VaR}_\beta\big)\ge 1-\gamma.
	\end{align*}
	where we have $\overline{VaR}_\beta:=\inf\left\{\alpha\in {\mathbb{R}}\,\middle|\,\widehat{\text{Prob}}(Z\le \alpha)-\sqrt{\frac{\ln(2/\gamma)}{2|D_\text{cal}|}}\ge \beta\right\}$ and $\underline{VaR}_\beta:=\inf\left\{\alpha\in {\mathbb{R}}\,\middle|\,\widehat{\text{Prob}}(Z\le \alpha)+\sqrt{\frac{\ln(2/\gamma)}{2|D_\text{cal}|}}\ge \beta\right\}$ with the empirical cumulative distribution function $\widehat{\text{Prob}}(Z\le \alpha):=\frac{1}{|D_\text{cal}|}\sum_{i=1}^{|D_\text{cal}|} \mathbb{I}(Z^{(i)}\le \alpha)$ and the indicator function $\mathbb{I}$.
\end{proposition}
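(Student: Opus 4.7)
The plan is to apply the Dvoretzky–Kiefer–Wolfowitz (DKW) inequality in Massart's tight form to the empirical CDF of the scalar nonconformity score $Z := d(Y_1, Y_2)$, and then convert the resulting two-sided uniform CDF concentration into a two-sided sandwich for the quantile $VaR_\beta(Z)$.

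First, since the calibration pairs $(y_1^{(i)},y_2^{(i)})$ are drawn i.i.d.\ from $\mathcal{D}_1\times\mathcal{D}_2$, the scores $Z^{(i)}$ are i.i.d.\ copies of $Z$. Let $F(\alpha):=\text{Prob}(Z\le \alpha)$ and let $\widehat{F}(\alpha):=\widehat{\text{Prob}}(Z\le \alpha)$ denote the empirical CDF, and set $\epsilon_n := \sqrt{\ln(2/\gamma)/(2|D_\text{cal}|)}$. Massart's DKW inequality yields directly
\begin{equation*}
    \text{Prob}\!\left(\sup_{\alpha\in\mathbb{R}}\bigl|\widehat{F}(\alpha)-F(\alpha)\bigr| \le \epsilon_n\right) \ge 1-\gamma,
\end{equation*}
and I would condition on this high-probability event $E$ for the rest of the argument. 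The critical feature is that the choice of $\epsilon_n$ is exactly the one that makes the factor of $2$ in Massart's bound collapse to the expression appearing in the definitions of $\underline{VaR}_\beta$ and $\overline{VaR}_\beta$.

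Next I would translate the CDF bound into quantile bounds. For the upper bound $VaR_\beta(Z)\le \overline{VaR}_\beta$, take any $\alpha$ with $\widehat{F}(\alpha)-\epsilon_n\ge \beta$ (i.e., $\alpha$ lies in the set whose infimum defines $\overline{VaR}_\beta$). On $E$ we have $F(\alpha)\ge \widehat{F}(\alpha)-\epsilon_n\ge \beta$, so $\alpha$ lies in the set defining $VaR_\beta(Z)$, and taking the infimum over all such $\alpha$ gives the claim. For the lower bound $\underline{VaR}_\beta \le VaR_\beta(Z)$, write $\alpha^\star:=VaR_\beta(Z)$; continuity of $F$ ensures $F(\alpha^\star)=\beta$, so on $E$ we obtain $\widehat{F}(\alpha^\star)+\epsilon_n\ge F(\alpha^\star)=\beta$, placing $\alpha^\star$ in the set whose infimum defines $\underline{VaR}_\beta$. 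Combining the two sandwich inequalities on $E$ and noting $\text{Prob}(E)\ge 1-\gamma$ yields the proposition.

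The proof is largely bookkeeping once DKW is invoked, so there is no serious obstacle; the only subtlety is the one-sided nature of the infima defining $VaR_\beta$, $\underline{VaR}_\beta$, and $\overline{VaR}_\beta$, which forces the $\pm\epsilon_n$ shifts on the CDF side to map consistently to inclusions at the quantile side. Continuity of $F$ is exactly the hypothesis that rules out off-by-$\epsilon_n$ pathologies caused by jump discontinuities at level $\beta$ and lets me invoke $F(\alpha^\star)=\beta$ without passing to limits.
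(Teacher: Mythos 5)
Your proof is correct and follows exactly the route the paper intends: the paper gives no in-line proof of this proposition, instead deferring to standard results (citing Massart's tight DKW bound and prior work), and your argument --- apply the two-sided DKW--Massart concentration to the empirical CDF of $Z$ with $\epsilon_n=\sqrt{\ln(2/\gamma)/(2|D_\text{cal}|)}$, then convert the uniform CDF sandwich into the quantile sandwich via the infimum definitions, using continuity of $F$ for the lower bound --- is precisely that standard argument. No gaps; if anything, continuity gives you slightly more than you need on the lower-bound side, since $\widehat{F}(\alpha^\star)+\epsilon_n\ge F(\alpha^\star)\ge\beta$ already follows from right-continuity of the CDF.
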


For estimating  $CVaR_\beta(Z)$, we assume that the random variable $d(Y_1,Y_2)$ has bounded support, i.e., that $\text{Prob}(d(Y_1,Y_2)\in[a,b])=1$. Note that $d(Y_1,Y_2)$ is usually bounded from below by $a:=0$ if $d$ is a metric. To obtain an upper bound, we assume that the distance function saturated at $b$, e.g., by clipping values larger than $b$ to $b$. In practice, this means that realizations that are far apart already have a large distance and are capped to $b$.
\begin{proposition}
     Let the premises of Definition \ref{def:1} hold and $D_\text{cal}$ be a calibration dataset with datapoints $(y_1^{(i)},y_2^{(i)})$ drawn from $\mathcal{D}_1\times\mathcal{D}_2$.  Let $\beta\in(0,1)$ be a risk level and $\gamma\in(0,1)$ be a failure threshold. Define $Z^{(i)}:=d(y_1^{(i)},y_2^{(i)})$ for each $i\in\{1,\hdots,|D_\text{cal}|\}$ and assume that $\text{Prob}(d(Y_1,Y_2)\in[a,b])=1$. Then, it holds that
	\begin{align*}
	\text{Prob}\big(\underline{CVaR}_\beta\le CVaR_\beta(d(Y_1,Y_2))\le \overline{CVaR}_\beta\big)\ge 1-\gamma.
	\end{align*}
	where $\overline{CVaR}_\beta:=\widehat{CVaR}_\beta+\sqrt{\frac{5\ln(3/\gamma)}{|D_\text{cal}|(1-\beta)}}(b-a)$ and $\underline{CVaR}_\beta:=\widehat{CVaR}_\beta-\sqrt{\frac{11\ln(3/\gamma)}{|D_\text{cal}|(1-\beta)}}(b-a)$ where the empirical estimate of $CVaR_\beta(Z)$ is $\widehat{CVaR}_\beta:=\inf_{\alpha\in{\mathbb{R}}}\big(\alpha+(|D_\text{cal}|(1-\beta))^{-1}\sum_{i=1}^{|D_\text{cal}|}[Z^i-\alpha]^+\big)$.
\end{proposition}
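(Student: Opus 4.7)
The plan is to recognize this as a direct instance of the concentration inequalities for the sample-average approximation of CVaR established in \cite{wang2010deviation}, and invoke those results after reformulating our estimator in their language. Since the datapoints $(y_1^{(i)}, y_2^{(i)})$ are i.i.d.\ draws from $\mathcal{D}_1\times\mathcal{D}_2$, the values $Z^{(i)} := d(y_1^{(i)}, y_2^{(i)})$ are i.i.d.\ copies of the bounded random variable $Z := d(Y_1,Y_2) \in [a,b]$. Moreover, by the Rockafellar--Uryasev variational formula recalled in Section~\ref{sec:pf},
\begin{align*}
CVaR_\beta(Z) = \inf_{\alpha \in \mathbb{R}} \Big( \alpha + (1-\beta)^{-1} E([Z-\alpha]^+) \Big),
\end{align*}
so that the empirical estimator $\widehat{CVaR}_\beta$ stated in the proposition is precisely the sample-average approximation of this infimum over $n := |D_\text{cal}|$ samples. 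This places us exactly in the setting of \cite{wang2010deviation}.

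Next I would invoke the two one-sided deviation inequalities from that reference. For the upper-tail direction (empirical overestimating the truth) one has, for every $\epsilon > 0$,
\begin{align*}
\text{Prob}\big(\widehat{CVaR}_\beta - CVaR_\beta(Z) > \epsilon\big) \le 3\exp\!\Big( -\tfrac{n(1-\beta)\epsilon^2}{11(b-a)^2}\Big),
\end{align*}
and analogously for the lower-tail direction with $11$ replaced by $5$ in the exponent. Setting each right-hand side equal to $\gamma$ and solving for $\epsilon$ yields the two thresholds $\sqrt{11\ln(3/\gamma)/(n(1-\beta))}(b-a)$ and $\sqrt{5\ln(3/\gamma)/(n(1-\beta))}(b-a)$ appearing in the definitions of $\underline{CVaR}_\beta$ and $\overline{CVaR}_\beta$. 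Rearranging the two one-sided bounds produces, respectively, the events $\{CVaR_\beta(Z) \ge \underline{CVaR}_\beta\}$ and $\{CVaR_\beta(Z) \le \overline{CVaR}_\beta\}$, each holding with probability at least $1-\gamma$. A final union bound (or the combined two-sided statement already present in \cite{wang2010deviation}) then yields
\begin{align*}
\text{Prob}\big(\underline{CVaR}_\beta \le CVaR_\beta(d(Y_1,Y_2)) \le \overline{CVaR}_\beta\big) \ge 1 - \gamma,
\end{align*}
as claimed.

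The main obstacle, which is entirely absorbed by the cited result, is the \emph{asymmetry} between the two tails of $\widehat{CVaR}_\beta$. The upper deviation can be controlled by evaluating the variational functional at any fixed feasible $\alpha$ (e.g.\ at the population optimizer) and applying a Hoeffding bound to the bounded integrand $\alpha + (1-\beta)^{-1}[Z-\alpha]^+ \in [a, a + (b-a)/(1-\beta)]$. The lower deviation, however, must control a supremum of deviations uniformly over $\alpha \in [a,b]$, which requires an $\epsilon$-net or chaining argument and inflates the constant from $5$ to $11$. The factor $\ln(3/\gamma)$ absorbs the internal union bound used in that net argument. I would not re-derive these ingredients but rather cite them and focus the proof on the reformulation and the final union bound.
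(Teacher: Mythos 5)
Your approach is the same as the paper's: the paper offers no proof of this proposition at all, only the remark that it follows "standard results such as from \cite{wang2010deviation}," so reformulating $\widehat{CVaR}_\beta$ as the sample-average approximation of the Rockafellar--Uryasev variational formula and invoking the one-sided deviation inequalities of that reference is exactly the intended argument. Two points in your write-up deserve attention, however. First, the final step as written does not close: if each one-sided event $\{CVaR_\beta(Z)\ge \underline{CVaR}_\beta\}$ and $\{CVaR_\beta(Z)\le \overline{CVaR}_\beta\}$ is calibrated to hold with probability at least $1-\gamma$ individually, a union bound over the two failure events yields only $1-2\gamma$ for their intersection, not $1-\gamma$. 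To obtain the stated conclusion you must either calibrate each side at level $\gamma/2$ (which would replace $\ln(3/\gamma)$ by $\ln(6/\gamma)$ in both thresholds) or cite a genuinely two-sided statement from \cite{wang2010deviation} that already delivers the displayed constants at total level $1-\gamma$; your parenthetical gestures at the latter but does not commit to it, and this is the one place where the proof is actually incomplete rather than merely deferred. Second, your closing paragraph is internally inconsistent with your displayed inequalities: you assign the constant $11$ to the upper deviation $\text{Prob}(\widehat{CVaR}_\beta - CVaR_\beta(Z)>\epsilon)$ and $5$ to the lower deviation, yet you then explain that it is the \emph{lower} deviation whose uniform-in-$\alpha$ control "inflates the constant from $5$ to $11$." One of the two assignments must be wrong; note also that for an infimum-type estimator it is the event $\widehat{CVaR}_\beta > CVaR_\beta(Z)+\epsilon$ that reduces to pointwise concentration at the population optimizer (since $\widehat{CVaR}_\beta\le \alpha^* + (|D_\text{cal}|(1-\beta))^{-1}\sum_i[Z^{(i)}-\alpha^*]^+$), so your heuristic for which tail is "easy" is reversed. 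Neither issue changes the fact that the substance of the argument is, as in the paper, delegated to the cited concentration result.
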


As a consequence of these two lemmas, we know that with a probability of $1-\gamma$ the systems $S_1$ and $S_2$ under the input $U$ are at risk of not being conformant if $\underline{VaR}_\beta\ge \alpha$ or $\underline{CVaR}_\beta\ge \alpha$ based on the risk measure of choice.

\section{Case Studies}
\label{sec:simulations}

% We perform conformance checking for
% five case studies, F16, dubin car, satelite, quadrotor and CARLA. We demonstrate that we can identify the conformance and apply it to determine transference between system with different controllers. 

% Convey the following:
% 
% \begin{itemize}
%     \item 
%     Table \ref{tb:dubin_com} : Calibration set size (Dubin's) : 
%     \item 
%     Table  \ref{tb:dubin_exp1} : Different probabilities, conformance performance (Dubin's) : Theorem 1   \ref{thm:1}
%     \item
%     Table \ref{tb:nonconfm_risk} : Non-conformance risk transference (Theorem 3) for different calibration sets : Theorem 3 for Dubins'
%     \item
%     Table \ref{tb:case_study_sum} : Summary of all case studies performance
%     for conformance : Theorem 1
%     \item  
%     Table \ref{tb:5} : Theorem 3 for all case studies (non-conformance
%     risk transfer)
% \end{itemize}
% 
% \begin{itemize}
%     \item Theorem works when we have small calibration set [Table 1]
%     \item Theorem works when we have bigger calbiration set [Table 1]
%     \item Theorem works for multiple scenarios
%     \item application to reachability
% \end{itemize}
% 
% \textcolor{red}{Everything above this should be eventually commented out}
\begin{figure*}[t]
\centering
% \captionsetup[subfloat]{farskip=2pt,captionskip=1pt}
\subfloat[Targets in Dubin's car]{
\includegraphics[width=0.12\textwidth]{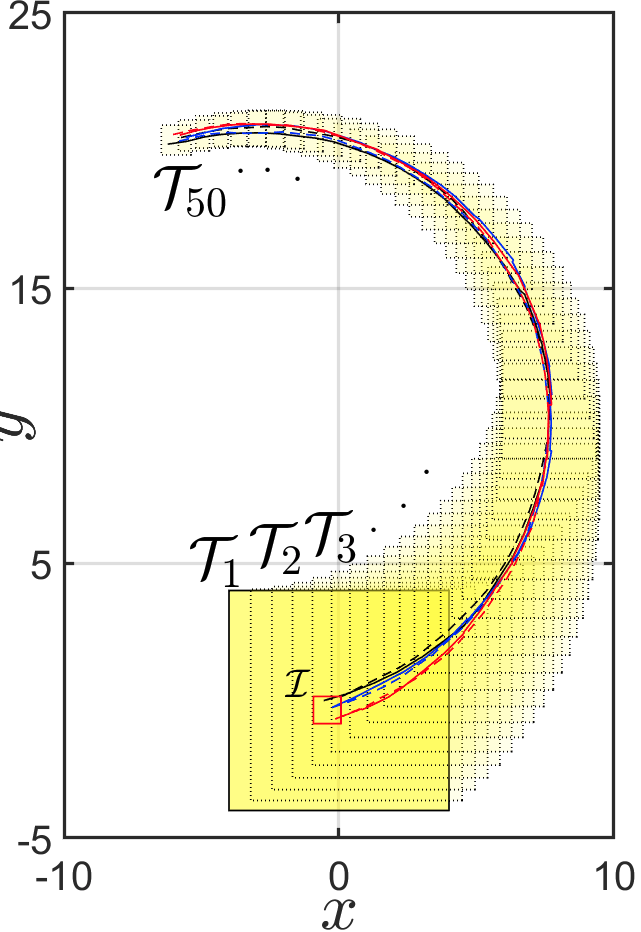}\label{subfig:Dubin}} \quad
\subfloat[CARLA: Cross-track error signals for $S_1,S_2$]{
\includegraphics[width=0.2489\linewidth, trim={0.5cm 0.15cm 0.5cm 0.15cm}, clip]{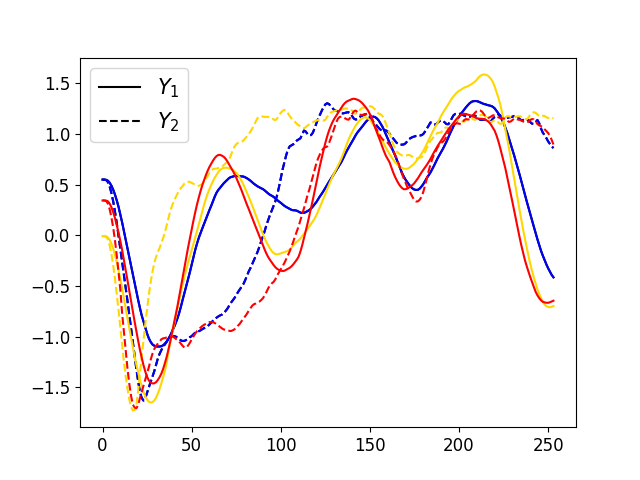}\label{carla_traj}} \quad
\subfloat[F16: altitude signals for $S_1,S_2$]{
\includegraphics[width=0.2489\linewidth,trim={0.5cm 0.15cm 0.5cm 0.1cm}, clip]{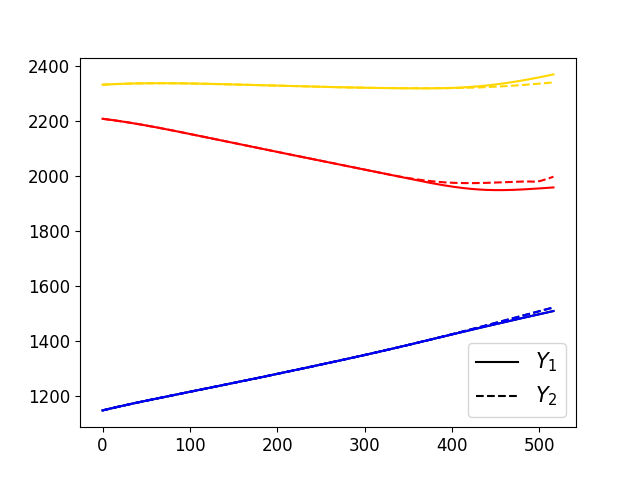}\label{f16_traj}} 
% \hspace{-5mm}
\subfloat[Spacecraft Trajectories]{
\includegraphics[scale=0.122]{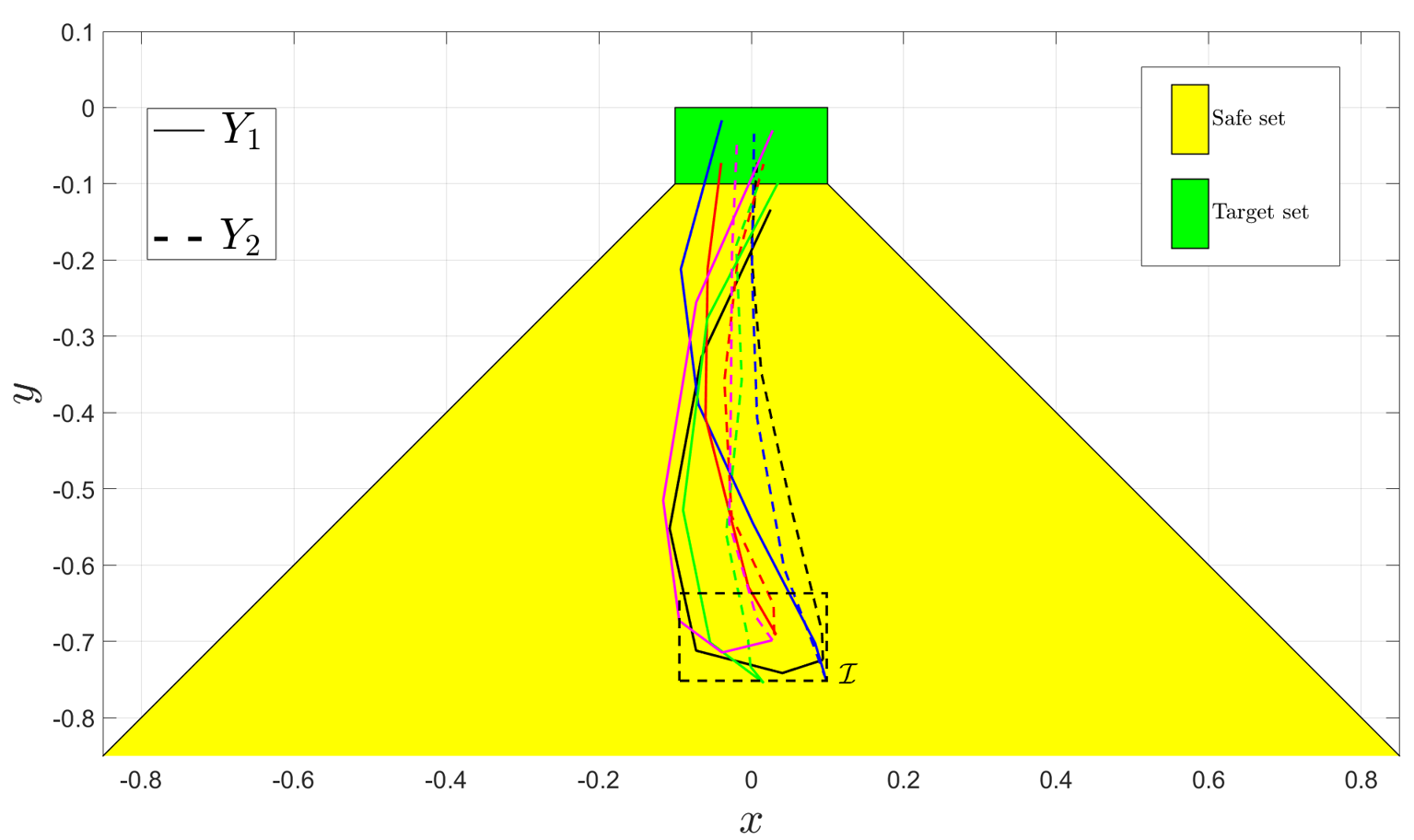}\label{subfig:spacecraft}} \hspace{-1mm}
% \subfloat[Quadrotor Trajectories]{
%\includegraphics[scale=0.25]{}\label{subfig:Quadrotor}}\hfill
\caption{The solid lines refer to $Y_1$ and the dashed lines refer to $Y_2$; in each of the displayed plots,
the initial condition for each pair of realizations
is the same.}\label{fig:jjkk}
\end{figure*}

We now demonstrate the practicality of stochastic conformance
and risk analysis through various case studies. For validation, if we obtain the value $\bar{Z}$ using a conformal prediction
procedure for a nonconformity score defined by the random variable
$Z$, i.e., such that $\text{Prob}(Z\le \bar{Z})\ge 1-\delta$. Then, given a test set  $D_{test}$, the validation score is defined as $VS(Z) := { |\{ z \in D_{test} \mid z \le \bar{Z} \}|}/
             { |D_{test}| }$.

%%%%%%%%%%%%%%%%%%%%%%%%%%%
\begin{figure*}[t]
\centering
\subfloat[Trajectory distance on validation set]{
\includegraphics[scale=0.31, trim={0.1cm 0cm 0.2cm 0.15cm}, clip]{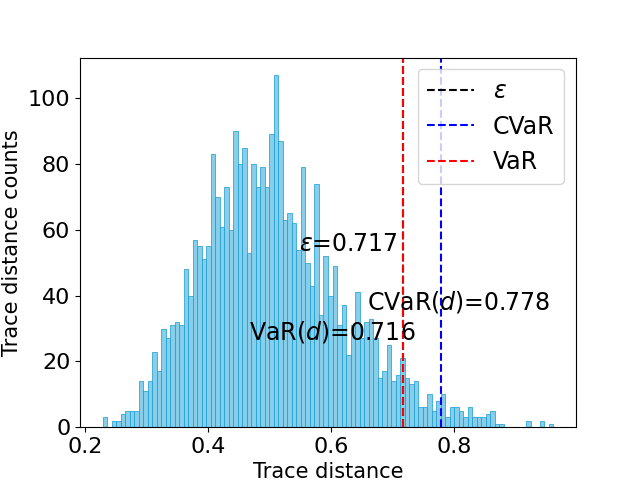}\label{subfig:q_dist_train}}
\quad
\subfloat[Robustness on validation set controller 1]{
\includegraphics[scale=0.31,trim={0.1cm 0cm 0.2cm 0.15cm}, clip]{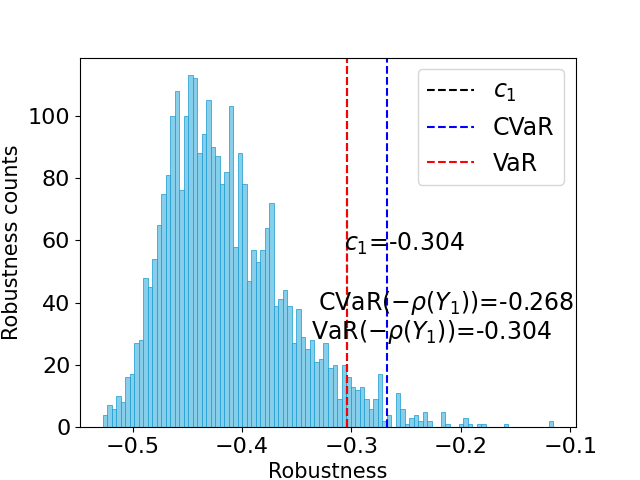}\label{subfig:q_rho_ctl1_train}}
\quad
\subfloat[Robustness on validation set controller 2]{
\includegraphics[scale=0.31,trim={0.1cm 0cm 0.2cm 0.3cm}, clip]{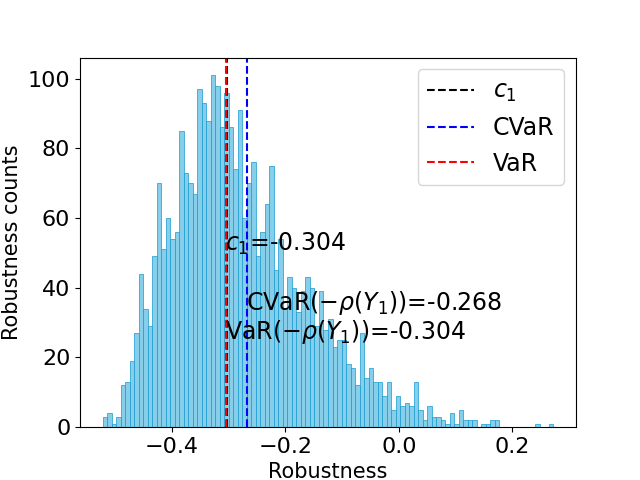}\label{subfig:q_rho_ctl2_train}}
\caption{Distance and robustness histogram for Dubin's car with $\delta=\bar{\delta}=0.05$. We use $CVaR(d)$  to denote $CVaR(d(Y_1, Y_2))$. The $c_1$ and $\epsilon$ are the values of conformal prediction on the calibration set of $\rho^{\phi_{dubin}}(Y_1)$ and $d_\infty(Y_1, Y_2)$.}\label{fig:dubin_hist}
\end{figure*}
%%%%%%%%%%%%%%%%%%%%%%%%%%%%%

\begin{table*}[t]
\centering
\begin{tabular*}{0.99\textwidth}{@{\extracolsep{\fill}}lllrrr}
\toprule
Distance & $|D_{cal}|$ & $\epsilon$ & \multicolumn{3}{c}{$d(Y_1,Y_2)$} \\
\cmidrule{4-6} 
Metric   &             &            & $VS(d(Y_1,Y_2))$ & $VaR(d(Y_1,Y_2))$ & $CVaR(d(Y_1,Y_2))$ \\
\midrule
\multirow{5}{2cm}{$d_\infty$}
    & 50   & 0.7825 & 0.987 & 0.7183 & 0.7947  \\ 
    % & 100  & 0.7313 & 0.959 & 0.7282 & 0.7626  \\ 
    & 1000 & 0.7163 & 0.956 & 0.7148 & 0.7647  \\  
    & 2000 & 0.7122 & 0.952& 0.712 &0.7814  \\  
    & 3000 & 0.7118 & 0.952& 0.7117 & 0.7862 \\
\midrule
%    \\
\multirow{5}{2cm}{$d_{sk}$ (Skorokhod Distance)} 
    & 50   &  0.6723  & 0.953 & 0.6517 & 0.7181 \\ 
    % & 100  &  0.6339  & 0.959 & 0.7282 & 0.7626 \\
    & 1000 &  0.6722  & 0.972 & 0.6711 & 0.7156 \\  
    & 2000 &  0.6645  & 0.96  & 0.6639 & 0.7106 \\
    & 3000 &  0.6619  & 0.952 & 0.6613 & 0.7079 \\
\midrule
\multirow{5}{2cm}{$d_2$} 
    & 50   & 2.6086 & 0.937 & 2.503 & 2.612 \\ 
    % & 100  & 2.5901 & 0.936 & 2.550 & 2.740 \\
    & 1000 & 2.7339 & 0.96  & 2.732 & 3.048 \\ 
    & 2000 & 2.7071 & 0.944 & 2.706 & 3.044 \\ 
    & 3000 & 2.7238 & 0.955& 2.722 & 3.0929 \\
\bottomrule
\end{tabular*}
\caption{Effect of calibration set size on the validation score
and risk measures. The size of the test set, i.e., $|D_{test}|$, is 
1000. We use the conformal prediction procedure from
Section~\ref{sec:statest} to obtain $\epsilon$ as
defined in Definition~\ref{def:1} for $\delta = 0.05$. }
\label{tab:1}
\vspace{-2mm}

\end{table*}

\begin{table*}[t]
\centering
% \resizebox{\hsize}{!}{$
\begin{tabular*}{.99\textwidth}{@{\extracolsep{\fill}}lccccccccccc}
\toprule
& $|D_{cal}|$ & $c_1$ & $\epsilon$ & $VS(\rho_1)$ & $VS(d_\infty)$ & $c_2$ & Thm~\ref{thm:1} & 
\multicolumn{3}{c}{$CVaR$} & Thm.\ref{thm:2} \\
\cmidrule{9-11} 
&              &      &            &              &                &   &
valid? &
$-d_\infty$ & $-\rho_1$ & $-\rho_2$ & valid? \\
\midrule
$\delta = 0.2$, & 100& 0.31 & 0.59 & 0.95 & 0.76 & 0.21 & Y & 0.90 & -0.28 & 0.00 & Y \\
$\bar{\delta} = 0.05$ & 3K & 0.30 & 0.60 & 0.95 & 0.79 & 0.20 & Y & 0.93 & -0.27 & 0.03 & Y \\ 
\midrule
$\delta =0.1$,  & 1K & 0.30 &0.67  & 0.96 & 0.92 & 0.15 & Y & 0.79 & -0.27 & 0.02 & Y \\
$\bar{\delta} = 0.05$ & 3K & 0.30 & 0.66 & 0.95 & 0.91 & 0.15 & Y & 0.81 & -0.27 & 0.03 & Y \\ 
\midrule
$\delta =0.05$, & 2K & 0.31 & 0.71 & 0.94 & 0.95 & 0.11 & Y & 0.78 & -0.27 & 0.02 & Y \\ 
$\bar{\delta} = 0.05$ & 3K & 0.30 & 0.71 & 0.95 & 0.95 & 0.11 & Y & 0.79 & -0.27 & 0.03 & Y \\ 
\bottomrule
\end{tabular*}
\caption{Empirical evaluation of transference.
Let $\rho_i$ be short-hand for $\rho^{\varphi_{dubin}}(Y_i)$ for $i=1,2$,
and $d_\infty$ be short-hand for $d_\infty(Y_1,Y_2)$.
Using Theorem~\ref{thm:5}, we 
show $\text{Prob}(\rho_1 \ge c_1) > 1-\bar{\delta}$, and
$\text{Prob}(d_\infty \le \epsilon) > 1-\delta$. 
The validity scores for each guarantee
on a test set $D_{test}$ with 1000 samples are shown.
The value $c_2$ is obtained using Theorem~\ref{thm:5}
on $\rho_2$ and observe that it 
exceeds $c_1 -\epsilon$, validating Theorem~\ref{thm:1}. Similarly,
we report the $CVaR$ values for $-\rho_1$ and $d_\infty$, and $CVaR(-\rho_1) 
+ CVaR(d_\infty) \ge CVaR(-\rho_2)$ for all cases, validating 
Theorem~\ref{thm:2}. \label{tb:dubin_exp1}
}
\end{table*}

\begin{table*}[]
\centering
\begin{tabular*}{.99\textwidth}{@{\extracolsep{\fill}}lccc cccc}
\toprule
Case Study & Spec & $|D_{cal}|$ & $|D_{test}|$ & $VS(\rho_1)$ & $VS(d_\infty)$ & $\epsilon$ & $VaR(d_\infty)$ \\
\midrule
F-16      & $\phi_{gcas}$ & 1K & 3K & 0.95 & 0.98 & 200 & 200 \\
CARLA     & $\phi_{cte}$  & 700 & 300 & 0.94 & 0.96 & 1.88 & 1.87 \\
% Quadrotor & $\phi_{quad}$ & 7K & 3K & 0.94 & 0.95 & 0.02 & 0.02 \\
Satellite & $\phi_{sat}$  & 7K & 3K & 0.96 & 0.97 & 0.18 & 0.18 \\
\bottomrule
\end{tabular*}
\caption{Transference results for various case studies. We use 
$\delta = 0.05$ and $\bar{\delta} = 0.05$. As before, $\rho_1$ is
used as short-hand for $\rho^\phi(Y_1)$ for each spec, and
$d_\infty$ is used as short-hand for $d_\infty(Y_1,Y_2)$.
\label{tb:case_study_sum}}
\end{table*}

\newcommand{\mr}[1]{\multirow{1}{*}{#1}}
\begin{table*}[t]
\begin{tabular*}{0.99\textwidth}{@{\extracolsep{\fill}}lc cc c ccc}
\toprule
Case Study & Spec & $|D_{cal}|$ & $|D_{test}|$ & $\delta$ & \multicolumn{3}{c}{$CVaR$} \\
\cmidrule{6-8} 
           &      &             &              &          &         
$d_\infty$ & $-\rho_1$ & $-\rho_2$ \\
\midrule
\mr{F-16} & \mr{$\phi_{gcas}$} & \mr{1K} & \mr{3K} & 0.01 & 200.3 & -62.3 & -62.3 \\
 %         &                    &          &        & 0.05 & 200.3 & -62.3 & -62.3 \\
% \midrule
\mr{CARLA}  & \mr{$\phi_{quad}$} & \mr{7K} & \mr{3K} & 0.01 & 2.04 & -0.31 & 0.88 \\
%            &                    &         &         & 0.05 & 1.97 & -0.40 & 0.61 \\
% \midrule
% \mr{Quadrotor} & \mr{$\phi_{quad}$} & \mr{700} & \mr{300} & 0.01 & 0.03 & -0.06 & 0.53 \\
%           &                   &          &           & 0.05 & 0.02 & -0.13 & 0.32 \\
% \midrule
\mr{Satellite} & \mr{$\phi_{sat}$} & \mr{7K} & \mr{3K} & 0.01 & 0.19 & 0.0 & 0.08 \\ 
 %         &         &         &        & 0.05 & 0.18 & 0.0 & 0.06 \\
% \midrule
\bottomrule

\end{tabular*}
\caption{Empirical validation of risk transference for all case studies. As before, $\rho_i$ is short-hand for $\rho^\phi(Y_i)$, and $d_\infty$ is short-hand
for $d_\infty(Y_1,Y_2)$. Here, we set the risk level $\beta = \delta$ in each case.\label{tb:5}}
\end{table*}

\subsection{Dubin’s car.} Dubin's car models the motion of
a point mass vehicle. The state variables are the $x$ 
and $y$ position,  $\theta$ denotes the steering
angle and $v$ the longitudinal velocity. While both $\theta$ and $v$
are typically assumed to be control inputs, we adapt the case study from
\cite{vinod2019affine} where the angular velocity $\omega(t)$ at each time 
$t$ is assumed to be given so that 
$\theta(t) := T_s\pi + \sum_{i=1}^t \omega(i) T_s$ where $T_s := 0.1$s. In 
this example, we  assume that $\omega(i) := 
\frac{\pi}{50T_s}$ for $i \in [1,25]$, 
and $\omega(i) := -\frac{\pi}{50T_s}$ for $i\in[26,50]$. The velocity 
$v(t)$ is provided by a feedback controller. The dynamics 
are assumed to have additive white Gaussian noise  $\eta^x(t), \eta^y(t) \sim \mathcal{N}(0,0.005)$. The dynamical 
equations of motion are as described below:
\begin{align*}
x(t\!+\!1) = x(t)\!+\!T_s v(t) \cos(\theta(t)) + \eta^x(t) \\  %\qquad 
y(t\!+\!1) = y(t)\!+\!T_s v(t) \sin(\theta(t)) + \eta^y(t)
\end{align*}

% \begin{equation}

% \begin{aligned}

%     x_{k+1} &= x_k + T_s \cos\left( 0.1\pi +\sum_{i=1}^k \omega_i T_s\right) v_k +\eta_k^x\\

%     y_{k+1} &= y_k + T_s \sin\left( 0.1\pi +\sum_{i=1}^k \omega_i T_s\right) v_k +\eta_k^y

% \end{aligned}

% \end{equation}

The two systems that we compare have two different feedback controllers.
The first feedback controller uses the method from 
\cite{lesser2013stochastic,vinod2019affine} and the second controller uses the method from \cite{vitus2011feedback}.
We plot a set of sampled trajectories in Fig.~\ref{subfig:Dubin}. 
This figure also shows the set of initial states 
$\mathcal{I}:=[-1,\ 0] \times [-1,\ 0]$. The controller aims to
ensure that the system trajectory stays within a series of sets $\mathcal{T}_1$
through $\mathcal{T}_{50}$, the corresponding STL specification is
\(\phi_{dubin} := \bigwedge_{i=1}^{50} F_{[i-1,i]} \left([x_i\ y_i] \in \mathcal{T}_i\right)\).
For the experiments that follow, we uniformly sampled initial states
from $\mathcal{I}$ and noise $\eta^x,\eta^y$ from the described Gaussian distribution. 

\noindent{\em Effect of calibration set size.} In the first experiment,
we wish to benchmark the effect of the size of the calibration set
$D_{cal}$ for various distance metrics. The results are shown in  
Table~\ref{tab:1}. The table shows that with smaller sizes of 
the calibration set, we get a more conservative $\epsilon$ for $d_\infty$ (which 
translates into a higher validation score). The $VaR$ is almost 
identical to the value of $\epsilon$ at larger $D_{cal}$ sizes. We 
note that the $CVaR$ values change with the value of $VaR$. A similar 
trend can be observed the Skorokhod
distance and the $L_2$-metric.

\noindent{\em Empirical evaluation of transference.}
We empirically demonstrate that Theorem~\ref{thm:1} holds. We use
$C(Y) = \rho^{\phi_{dubin}}(Y)$, i.e., the robust semantics w.r.t.
the property $\phi_{dubin}$, and the $L_\infty$ signal metric $d_\infty$. The results are shown in 
Table~\ref{tb:dubin_exp1}. We can see that the predicted 
upper bound for the robustness of realizations of $Y_2$ w.r.t.
$\phi_{dubin}$ is negative ($c_1-\epsilon$), so it is not possible
to conclude that the second system satisfies $\phi_{dubin}$ with
probability greater than $1-\delta-\bar{\delta}$. However, we note that $c_2$ 
is indeed greater than the bound $(c_1-\epsilon)$. Similarly, we
show that Theorem~\ref{thm:2} is also empirically validated by
computing the $CVaR$ values for the first system and the risk
measure on $d_\infty(Y_1,Y_2)$. We show the empirical distributions
of $d_\infty(Y_1,Y_2)$, and $\rho^{\phi_{dubin}}(Y_i)$ for $i=1,2$ in Figure~\ref{fig:dubin_hist}.

% We sample twice for the case study. First time we sample 4900 number of trajectories, and test the performance. % 
% 

% We then sampled 10000 number of trajectories, and proceed with the remaining case study.
% For $\delta = 0.05$, train ratio = $0.7$, test ratio = $0.3$
% We show the histogram of distance distribution between the two controllers on training data and test data. We show histograms of robustness distribution for controller 1 and controller 2 on training data and test data. We highlight the conformal value, Value at Risk, Conditional Value at Risk on the histogram. We report the theorem success rate.

%  \paragraph{lipschitz results}  We analyze initial conditions region $ \mathcal{U} \in [-0.1, 0] \times [-0.1, 0]$ on Dubin case study for the two different systems. We grid the space evenly into 25 subspaces  which gives us $\kappa=0.02$. We sample 650 trajectories from each region for their validation set. Then $\bar{Z}$ returned by algorithm \ref{alg:overview_} equals 0.7562, $L\kappa$ equals 0.0687, and we test the resulting $\bar{Z} + L\kappa = 0.8249$ on three test sets of unseen initial conditions with sizes 1000, 1000, and 2500. The lowest success rate on the test set is 0.9996, the highest is 1.0, the goal success rate is 0.9. The experiments demonstrate the  performance of theorem \ref{th:supU}.

\noindent {\em Empirical evaluation of Theorem~\ref{thm:1_}.} We next apply Algorithms~\ref{alg:overview} and \ref{alg:overview_} to this case study. We grid the initial set of
states evenly into 25 cells with a grid size of $\kappa=0.02$. 
We sample 650  trajectories on each cell to obtain their calibration 
sets. Algorithm~\ref{alg:overview_} gives $\bar{Z} = 0.7562$ and $L\kappa = 
0.0687$, giving $\bar{Z} + L\kappa = 0.8249$. We then evaluate
on two test sets of unseen initial conditions with $|D_{test}|$ =  1000, 2500. The success rate on the test sets are 0.9996  and 1.0, with the goal success rate being 0.9. The experiments demonstrate 
the effectiveness of Theorem \ref{th:supU}.

\subsection{F-16 aircraft.}
The F-16 aicraft control system from
\cite{heidlauf2018verification} uses a 13-dimensional non-linear plant model
based on a 6 d.o.f. airplane model, and its dynamics describe force equations, 
moments,  kinematics, and engine behavior. We alter the original system $S_1$ 
from \cite{heidlauf2018verification} to a modified version $S_2$ by changing 
the controller gains. We evaluate the performance of the two systems on the ground collision avoidance scenario with the specification 
$\phi_{gcas} := G_{[0,T]}(h \geq 1000)$ where $T$ is the mission time and
$h$ is the altitude. For data collection, we perform uniform sampling of the initial states. We assume that the 
x-center of gravity (xcg) of the aircraft is a stochastic parameter
with uniform distribution on $[0,0.8]$. We obtain a calibration set $D_{cal}$ of size 1000 by uniform sampling of the
initial states and the xcg parameter. We separately sample
3000 signals for $D_{test}$. The results of transference 
and risk estimates are shown in Table~\ref{tb:case_study_sum}.

%Now we look at the application of the conformance results. We gather the robustness value for each output trajectories from the first system $S_1$, and calculate the lowest robustness $\rho*$ of the system for unseen input $\mathcal{U}$. For every unseen trajectories, their robustness $\rho$ will satisfy $P(\rho \geq c) \geq 1 - \delta'$. We demonstrate this property of theorem\ref{thm:1_} in table \ref{tb:case_study_sum}. We demonstrate the performance of the theorem \ref{thm:2} on F16 case study in table \ref{tb:5.
%
% We conclude that our methods can provide a high confidence level guarantee, disregarding the training data number. 

\subsection{Autonomous Driving using the CARLA simulator.} CARLA is a
high-fidelity simulator for testing of autonomous driving systems 
\cite{dosovitskiy2017carla}. We consider two learning-based lane-keeping 
controllers from \cite{lindemann2022risk}, one being an imitation 
learning controller ($S_1$) and another being a learned barrier function 
controller ($S_2$). We obtain $1000$ trajectories from each controller 
during a $180$ degree left turn, and we use $700$ of them for calibration 
and $300$ for testing. In this data, the initial states $(c_e, \theta_e)$ 
are drawn uniformly from $[-1, 1] \times [-0.4, 0.4]$ where $c_e$ is 
the deviation from the center of the lane center (cross track error) and 
$\theta_e$ is the orientation error. The STL specification $\phi_{cte} := 
G(|c_e| \leq 2.25)$ restricts $|c_e|$ to be bounded by  $2.25$. The
results are shown in Table~\ref{tb:case_study_sum}.

\subsection{Spacecraft Rendezvous }
Next, we consider a spacecraft rendezvous problem from 
\cite{vinod2019sreachtools}. Here, a deputy spacecraft is to rendezvous with 
a master spacecraft while staying within a line-of-sight cone. The system is a
4D model $s = [x,\ y,\ v_x,\ v_y ]^\top$ where $x,y \in \mathbb{R}$ are the 
relative horizontal and vertical distances between the two spacecrafts and 
$v_x, v_y \in \mathbb{R}$ are the relative vertical and horizontal velocities.
There are two different feedback controllers, using the same control algorithms we 
used in Dubin's car example (i.e., the controllers from \cite{lesser2013stochastic,vinod2019affine} and \cite{vitus2011feedback}).  
The STL specification is a reach-avoid specification (visually
depicted in Fig.~\ref{subfig:spacecraft}), which requires
the system to always stay in the yellow region and eventually
reach the target rectangle $\mathcal{T}$ shown: $\phi_{sat} := G_{[1,5]}\left(  y,|y|,|v_x|,|v_y| \leq 
                 -|x|,y_{max},v_{x,max},v_{y,max}\right) \wedge F_{[1,5]} s \in \mathcal{T}$.
The set of initial states is $\mathcal{I} = [-0.1,\ 0.1]\times[-0.1,\ 0.1]$. The system is assumed to have additive Gaussian process
noise with zero mean and a diagonal covariance matrix with
variances $10^{-4},10^{-4},5\times 10^{-8},5\times10^{-8}$. We uniformly sample $100$ different initial states from $\mathcal{I}$ and $100$ noise values sampled from the noise distribution. We
divide the dataset into $D_{cal}$ and $D_{test}$ with sizes
7K and 3K respectively. The results are shown in Table~\ref{tb:case_study_sum}.
% \begin{figure}
%     \centering
%     \includegraphics[width=\textwidth]{}
%     \caption{Caption}
%     \label{fig:satelite_reachavoid}
% \end{figure}

\noindent{\em Discussion on results for Transference across
case studies.} We omit the column for Table~\ref{tb:case_study_sum} 
that shows the proportion of $D_{test}$ of the realizations of
$Y_2$ for which the bound $c_1 - \epsilon$ exceeds $c_2$, where
the $c_i$'s are the conformal bounds on $\rho_i$'s. For all case
studies this ratio was either 1.0 or close to 1.0,
establishing the  empirical validity of Theorem~\ref{thm:1}.
% Interestingly, in this case study, the value of $c_1 - 
% \epsilon$ is positive, which means that with confidence 
% $(1-\delta-\bar{\delta})$, the system $S_2$ also satisfies 
%$\phi_{quad}$. 
We also observe that above results show that it
is feasible to use stochastic conformance in a control
improvization loop, where we want to change a system controller
(perhaps for optimizing a performance objective) while allowing
only some degradation on probabilistic safety guarantees.

\section{Related Work}
%%%%%%%%%%%%%%%
% Add approximate (probabilistic) bisimulations:
% Abate with "Approximation Metrics Based on Probabilistic Bisimulations for General State-Space Markov Processes: A Survey"
% Girard: Approximation Metrics for Discrete and Continuous Systems
% Julius: Approximations of Stochastic Hybrid Systems

% Total variation by Bacci:
% First https://core.ac.uk/reader/304608740
% Second On the Total Variation Distance of Semi-Markov Chains
%%%%%%%%%%%%%%%
% \textcolor{blue}{, \cite{IvanaCerna},,\cite{GeorgWeissenbacher} , \cite{LauraKovacs}, , }

Conformance has found applications in cyber-physical system design \cite{jin2014benchmarks,araujo2018sound} as well as in drug testing and other applications \cite{dimitrova2020conformance,biewer2020conformance,RaynaDimitrova}. Our work is inspired by existing works for \textbf{conformance of deterministic systems} by which we mean that systems are non-stochastic, see \cite{conformance_survey,khakpour2015notions} for surveys.  The authors in \cite{abbas2014conformance,abbas2014formal,abbas2015test} considered conformance testing between hybrid system. To capture distance between hybrid system trajectories that may exhibit discontinuities, signal metrics were considered that simultaneously quantify distance in space and time,  resembling notions of system closeness in the hybrid systems literature  \cite{goebel2009hybrid,goebel2012hybrid}. For instance, \cite{abbas2014conformance} proposes $(T, J, (\tau, \epsilon))$-closeness where $\tau$ and $\epsilon$ capture both timing distortions and state value mismatches, respectively, and where $T$ and $J$ quantify limits on the total time and number of discontinuities, respectively. A stronger notion compared to $(T, J, (\tau, \epsilon))$-closeness was proposed in \cite{deshmukh2015quantifying} by using the Skoroghod metric. The benefit of \cite{deshmukh2015quantifying} over the other notion is that it preserves the timing structure. All these works derive transference results with respect to timed linear temporal logic or metric interval temporal logic specifications. 

\textbf{Conformance of stochastic systems} has been less studied. The authors in \cite{leemans2020stochastic} propose precision and recall conformance measures based on the notion of entropy of stochastic automata. The authors in \cite{leemans2019earth} use the Wasserstein distance to quantify distance between two stochastic systems, which is fundamentally different from our approach. (Bi)simulation relations for stochastic systems were studied in \cite{julius2006approximate,julius2009approximations,haesaert2020robust}. \textcolor{black}{Such techniques can define behavioral relations for systems \cite{abate, julius}, and they can be used to transfer verification results between systems \cite{zamani}. The authors in \cite{delgrange2023wasserstein} utilize such behavioral relations to verify RL policies between a concrete and an abstract system. We remark that bisimulations are difficult to compute, see e.g., \cite{GIRARD2011568}, unlike our approach.} Probably closest to our work is \cite{wang2021probabilistic}. However, in this paper conformance is task specific which allows two systems to be conformant w.r.t. a system specification even when the systems produce completely different trajectories. Additionally, we consider a worst-case notion of conformance where no information about the input that excites both stochastic systems is available. 

% \textcolor{blue}{\textbf{Relation to bisimulation guarantees}} Bisimulations and approximate bisimulations are related to our conformance notion as they define behavioral relationships \cite{abate, julius}. Work in \cite{zamani} uses bisimulation-based methods to verify infinite-step opacity of a system and they can verify the original system by verifying an abstract system. However, bisimulation functions are difficult to compute for nonlinear systems, need knowledge of the system dynamics, and are restrictive as they require distance between trajectories to converge. For example, authors in \cite{GIRARD2011568} show that computing approximate bisimulation relationships through $\delta$-bisimulation functions (that are assumed to be polynomials) for nonlinear systems with polynomial dynamics requires manually picking an appropriate basis for polynomials and solving complicated semidefinite programs. With our notion, instead, we can obtain transference results directly. The disadvantages of $\delta$-bisimulations required the CPS field to move towards contraction metrics, and discrepancy functions. While they resolve some of the difficulties with $\delta$-bisimulations, they also require knowledge of system dynamics. The only exception is the work on PAC-based learning of discrepancy functions. However, discrepancy functions have been only defined between trajectories of the same system, not between trajectories of different systems as we do in this paper. 

\section{Conclusion}
\label{sec:conclusion}
We studied conformance of stochastic dynamical systems. Particularly, we defined conformance between two stochastic systems as  probabilistic bounds over the distribution of distances between model trajectories. Additionally, we proposed the non-conformance risk to reason about the risk of stochastic systems not being conformant. We showed that both notions have the transference property, meaning that conformant systems satisfy similar system specifications. Lastly, we showed how stochastic conformance and the non-conformance risk can be estimated from data using statistical tools such as conformal prediction.

% \subsubsection{Acknowledgements} Xin Qin and Jyotirmoy V. Deshmukh gratefully acknowledge the support from the National Science Foundation under the award CCF-SHF-1910088, CAREER award SHF-2048094, the FMitF award CCF-1837131, a grant from the Airbus Institute for Engineering Research, and funding support from Toyota Research Institute of North America (TRINA) through the USC center for Autonomy and AI (AAI).

\section*{Acknowledgments}
The authors would like to thank the anonymous reviewers for their feedback. The National Science Foundation supported this work through the following grants: CAREER award (SHF-2048094), CNS-1932620, CNS-2039087, FMitF-1837131, CCF-SHF-1932620, the Airbus Institute for Engineering Research, and funding by Toyota R\&D and Siemens Corporate Research through the USC Center for Autonomy and AI.

\bibliographystyle{IEEEtran}
\bibliography{literature}

\appendix
\section{Semantics of Signal Temporal Logic}
\label{app:STL}
For a signal $y:\mathbb{T}\to\mathbb{R}^n$, the semantics of an STL formula $\phi$ that is imposed at time $t$, denoted by $(y,t)\models \phi$, can be recursively computed based on the structure of $\phi$ using the following rules:
	\begin{align*}
	(y,\tau)\models \text{True} & \hspace{0.5cm} \text{iff} \hspace{0.5cm} \text{True},\\
	(y,\tau)\models \mu & \hspace{0.5cm} \text{iff} \hspace{0.5cm} h(y(\tau))\ge 0,\\
	(y,\tau)\models \neg\phi & \hspace{0.5cm} \text{iff} \hspace{0.5cm} (y,\tau)\not\models \phi,\\
	(y,\tau)\models \phi' \wedge \phi'' & \hspace{0.5cm} \text{iff} \hspace{0.5cm} (y,\tau)\models\phi' \text{ and } (y,\tau)\models\phi'',\\
	(y,\tau)\models \phi' U_I \phi'' & \hspace{0.5cm} \text{iff} \hspace{0.5cm} \exists \tau''\in (\tau\oplus I)\cap \mathbb{T} \text{ s.t. } (y,\tau'')\models\phi'' \text{ and } \forall \tau'\in(\tau,\tau'')\cap \mathbb{T}, (y,\tau')\models\phi'.
	\end{align*}

The robust semantics $\rho^{\phi}(y,t)$ provide more information than the semantics $(y,t)\models \phi$, and indicate how robustly a specification is satisfied or violated. We first define the predicate robustness as \begin{align*}
	\text{dist}^{\mu}(y,t):=\inf_{y'\in \text{cl}(O^\mu)} \|y(t)-y'\|.
\end{align*}
where $O^\mu:=\{y\in\mathbb{R}^n|h(y)\ge 0\}$ denotes the set of all states that satisfy the predicate $\mu$, $\text{cl}(\cdot)$ denotes the closure of a set, and $\|\cdot\|$ denotes a vector norm. We now can recursively calculate $\rho^{\phi}(y,t)$ based on the structure of $\phi$ using the following rules:
\begin{align*}
	\rho^\text{True}(y,\tau)& := \infty,\\
	\rho^{\mu}(y,\tau)& := \begin{cases} \text{dist}^{\neg\mu}(y,\tau) &\text{if } y(\tau)\in O^\mu\\
	-\text{dist}^{\mu}(y,\tau) &\text{otherwise,}
	\end{cases} \\
	\rho^{\neg\phi}(y,\tau) &:= 	-\rho^{\phi}(y,\tau),\\
	\rho^{\phi' \wedge \phi''}(y,\tau) &:= 	\min(\rho^{\phi'}(y,\tau),\rho^{\phi''}(y,\tau)),\\
	%	\rho^{\phi' \vee \phi''}(y,t) &:= 	\max(\rho^{\phi'}(y,t),\rho^{\phi''}(y,t)),\\
	\rho^{\phi' U_I \phi''}(y,\tau) &:=  \underset{\tau''\in (\tau\oplus I)\cap \mathbb{T}}{\text{sup}}  \Big(\min\big(\rho^{\phi''}(y,\tau''), \underset{\tau'\in (\tau,\tau'')\cap \mathbb{T}}{\text{inf}}\rho^{\phi'}(y,\tau') \big)\Big).
	\end{align*}

% \section{Comparison between Stochastic Conformance and Divergences of Distributions}
% \label{sec:divergence}

% \textbf{Proposition 1.}
% \emph{There exist stochastic systems $S_1$ and $S_2$ and distance metrics $d$ where equation~\eqref{eq:conformance}: i) is satisfied for $\epsilon>0$ and $\delta = 0$, i.e., w.p. $1$, but where the divergence between
% the systems is unbounded, and ii) is not satisfied for any given $\epsilon,\delta>0$, but where the divergence between the systems is zero.}

\section{Hölder Continuity of Robust Semantics $\rho^\phi$ w.r.t. $d_{\infty}$}
\label{sec:hölder}
\begin{proposition}\label{lemma:hölder} 
Let $\phi$ be an STL specification and let $d_\infty:\mathcal{Y}\times\mathcal{Y}\to\mathbb{R}$ be the $L_\infty$ signal metric. Then, it holds that the robust semantics $\rho^\phi$ are Hölder continuous in the first argument  w.r.t. $d_\infty$ with $H=\gamma=1$.
\end{proposition}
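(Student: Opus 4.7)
The plan is to prove the proposition by structural induction on the STL formula $\phi$, following the recursive definition of $\rho^\phi$ given in Appendix~\ref{app:STL}. The goal at each step is to show that for any signals $y_1,y_2:\mathbb{T}\to\mathbb{R}^n$ and any $\tau\in\mathbb{T}$,
\[
|\rho^\phi(y_1,\tau)-\rho^\phi(y_2,\tau)| \;\le\; d_\infty(y_1,y_2),
\]
from which the proposition (with $H=\gamma=1$) follows. The key enabling fact throughout the induction will be the two elementary inequalities: (i) for any real functions $f,g$ on a common domain $D$, $|\sup_D f-\sup_D g|\le \sup_D |f-g|$ and $|\inf_D f-\inf_D g|\le \sup_D |f-g|$; and (ii) $|\min(a,b)-\min(a',b')|\le \max(|a-a'|,|b-b'|)$, with the analogous bound for $\max$.

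First I would handle the base cases. For $\phi=\text{True}$, the statement holds trivially (both sides are $\infty$, or one can restrict the claim to formulas generating finite robustness values). For $\phi=\mu$, the robustness is $\pm\,\mathrm{dist}^{\mu\text{ or }\neg\mu}(y,\tau)$, i.e., the signed distance from $y(\tau)$ to the boundary of $O^\mu$. Since the signed distance to a closed set is $1$-Lipschitz in its point argument (this is a standard consequence of the triangle inequality applied to the infimum defining $\mathrm{dist}$), we get
\[
|\rho^\mu(y_1,\tau)-\rho^\mu(y_2,\tau)|\;\le\;\|y_1(\tau)-y_2(\tau)\|\;\le\; d_\infty(y_1,y_2).
\]

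Next, the inductive steps. Negation is immediate because $\rho^{\neg\phi}=-\rho^{\phi}$. For conjunction $\phi'\wedge\phi''$, the bound follows by applying fact~(ii) and then the two inductive hypotheses. For the until operator $\phi'U_I\phi''$, I would unfold its robust semantics as a $\sup$ over $\tau''\in(\tau\oplus I)\cap\mathbb{T}$ of a $\min$ of $\rho^{\phi''}(y,\tau'')$ and an $\inf$ over $\tau'\in(\tau,\tau'')\cap\mathbb{T}$ of $\rho^{\phi'}(y,\tau')$. Using fact~(i) I can pull the difference inside the $\sup$ and $\inf$, then use fact~(ii) to pull it inside the $\min$, and finally apply the inductive hypotheses on $\phi'$ and $\phi''$ to bound each resulting term by $d_\infty(y_1,y_2)$. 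Disjunction, eventually, and always follow either directly (as they are defined from the above by the derivations in the paper) or by the same pattern.

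The main obstacle is purely bookkeeping rather than conceptual: one must be careful in the until case to justify swapping the absolute value past nested $\sup$/$\inf$/$\min$ operators, since these bounds are not equalities. Once those two elementary lemmas on $|\sup-\sup|$ and $|\min-\min|$ are established (a short direct argument), the induction goes through cleanly, and since the bound $d_\infty(y_1,y_2)$ on the right-hand side does not depend on $\tau$ or on the subformula structure, it propagates unchanged through every inductive step, giving exactly $H=\gamma=1$ as claimed.
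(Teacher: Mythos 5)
Your proposal is correct and follows essentially the same route as the paper's proof: structural induction on $\phi$, $1$-Lipschitzness of the signed distance at the predicate level, and propagation of the bound through $\min$, $\sup$, and $\inf$ via the standard inequalities. The only cosmetic difference is that the paper first rewrites $\phi$ in positive normal form to dispense with negation, whereas you handle $\rho^{\neg\phi}=-\rho^{\phi}$ directly, which is equally valid (and arguably cleaner) for the two-sided bound.
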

\begin{proof}
 Let $y_1,y_2\in\mathcal{Y}$ be two deterministic signals. We would like to show that, for a fixed time $\tau\in\mathbb{T}$, it holds that 
\begin{align*}
    |\rho^\phi(y_1,\tau)-\rho^\phi(y_2,\tau)|\le d_\infty(y_1,y_2).
\end{align*}
The proof idea is largely based on ideas from  \cite[Lemma 2]{cleaveland2022risk} in which it was, however, only stated that $\rho^\phi(y_1,\tau)-\rho^\phi(y_2,\tau)\le d_\infty(y_1,y_2)$. While the other direction follows trivially, we present the full proof using the notation of this paper for the convenience of the reader. Let us now assume that the STL formula $\phi$ is in positive normal form, e.g., that $\phi$ contains no negations. This assumption is made without loss of generality and the result holds for any STL formula since every STL formula $\phi$ can be rewritten into a semantically equivalent STL formula that is in positive normal form \cite[Proposition 2]{sadraddini2015robust}. We will perform the proof recursively on the structure of the formula $\phi$.

\textbf{Predicates. } First note that $\text{dist}^{\neg\mu}(y,\tau)$ is a Lipschitz continuous function with Lipschitz constant one in the sense that $|\text{dist}^{\mu}(y_1,\tau)-\text{dist}^{\mu}(y_2,\tau)|\le \|y_1(\tau)-y_2(\tau)\|$, see for instance \cite[Chapter 3]{munkres1975prentice}. Accordingly, it can easily be seen that $\rho^{\mu}(y,\tau)$ is Lipschitz continuous with Lipschitz constant one so that  $|\rho^{\mu}(y_1,\tau)-\rho^{\mu}(y_2,\tau)| \le \|y_1(\tau)-y_2(\tau)\|\le d_\infty(y_1,y_2)$.  

\textbf{Conjunctions. } For conjunctions $\phi'\wedge\phi''$ and by the induction assumption, it holds that $ |\rho^{\phi'}(y_1,\tau)- \rho^{\phi'}(y_2,\tau)| \le d_\infty(y_1,y_2)$ and $ |\rho^{\phi''}(y_1,\tau)- \rho^{\phi''}(y_2,\tau)| \le d_\infty(y_1,y_2)$. Now, it follows that
\begin{align*}
    \rho^{\phi'\wedge \phi''}(y_1,\tau)&=\min(\rho^{\phi'}(y_1,\tau),\rho^{\phi'}(y_1,\tau))\\
    &\ge \min(\rho^{\phi'}(y_2,\tau),\rho^{\phi''}(y_2,\tau))- d_\infty(y_1,y_2)\\
    &=\rho^{\phi'\wedge \phi''}(y_2,\tau)- d_\infty(y_1,y_2).
\end{align*}
Similarly, we can derive the other direction
\begin{align*}
    \rho^{\phi'\wedge \phi''}(y_1,\tau)&=\min(\rho^{\phi'}(y_1,\tau),\rho^{\phi'}(y_1,\tau))\\
    &\le \min(\rho^{\phi'}(y_2,\tau),\rho^{\phi''}(y_2,\tau))+ d_\infty(y_1,y_2)\\
    &=\rho^{\phi'\wedge \phi''}(y_2,\tau)+ d_\infty(y_1,y_2).
\end{align*}
Consequently, it follows that $|\rho^{\phi'\wedge \phi''}(y_1,\tau)-\rho^{\phi'\wedge \phi''}(y_2,\tau)|\le d_\infty(y_1,y_2)$.

\textbf{Until. } Using the induction assumption and the same reasoning as for conjunctions (it can be used in the same way for $\inf$ or $\sup$ operators), it follows that $|\rho^{\phi'U_I \phi''}(y_2,\tau)- \rho^{\phi'U_I \phi''}(y_2,\tau)|\le d_\infty(y_1,y_2)$.
\end{proof}

% \section{Hölder Continuity of $C(y) := \int_{0}^{T} y(t)^\top y(t)dt$ w.r.t. $d_1$}
% \label{sec:hölder_}

\section{H\"{o}lder Continuity of $C(y) = \rho^\varphi(y)$ w.r.t $d_{sk}$}
\label{sec:hölder2}
We recall from \cite{deshmukh2015quantifying} the definition of the
{\em Skorokhod} metric. In the following definition, $r:\mathbb{T}\to\mathbb{T}$
is a strictly increasing, bijective function known as a {\em retiming} function,
and $\mathcal{R}$ is the space of all possible retiming functions
\begin{equation}
\label{eq:skorokhod}
   d_{sk}(y_1,y_2) =  \inf_{r \in \mathcal{R}} 
                        \max\left( d_\infty(r(t),t), 
                                   d_\infty(y_1,y_2 \circ r) 
                            \right).
\end{equation}
For every retiming $r(t)$, the operand of the $\inf$ operator is 
the maximum between the magnitude of the retiming and the $d_\infty$
distance between $y_1$ and the retimed $y_2$. Skorokhod distance is 
then the infimum across all retimings.
\begin{proposition} 
Let $\phi$ be an STL specification and let $d_{sk}:\mathcal{Y}\times\mathcal{Y}\to\mathbb{R}$ be the Skorokhod metric. If the output signals are Lipschitz-continuous with
Lipschitz constant $K_y$, it holds that the robust semantics $\rho^\phi$ are Hölder continuous  w.r.t. $d_{sk}$ with $H=(1+K_y)$ and $\gamma=1$.
\end{proposition}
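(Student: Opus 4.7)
The plan is to reduce the statement to the $L_\infty$ case that was already established. The key observation is that if we can show
\[ d_\infty(y_1,y_2) \;\le\; (1+K_y)\, d_{sk}(y_1,y_2), \]
then combining with the previous proposition (which gives $|\rho^\phi(y_1,\tau)-\rho^\phi(y_2,\tau)| \le d_\infty(y_1,y_2)$) yields the desired Hölder bound with $H = 1+K_y$ and $\gamma = 1$ immediately. So the entire proof reduces to establishing this single inequality between the two signal metrics under the Lipschitz assumption on the signals.

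The first step is to fix an arbitrary retiming $r \in \mathcal{R}$ and bound $\|y_1(t) - y_2(t)\|$ pointwise. By the triangle inequality,
\[ \|y_1(t) - y_2(t)\| \;\le\; \|y_1(t) - y_2(r(t))\| + \|y_2(r(t)) - y_2(t)\|. \]
The first term on the right is bounded by $d_\infty(y_1, y_2 \circ r)$. For the second term, the Lipschitz continuity of $y_2$ with constant $K_y$ gives $\|y_2(r(t)) - y_2(t)\| \le K_y |r(t) - t| \le K_y\, d_\infty(r, \mathrm{id})$. Hence
\[ \|y_1(t) - y_2(t)\| \;\le\; d_\infty(y_1, y_2 \circ r) + K_y\, d_\infty(r, \mathrm{id}) \;\le\; (1+K_y)\, \max\bigl(d_\infty(r,\mathrm{id}),\, d_\infty(y_1, y_2 \circ r)\bigr). \]

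The second step is to take the supremum over $t \in \mathbb{T}$ on the left (which does not affect the right-hand side) and then the infimum over all retimings $r \in \mathcal{R}$ on the right, yielding exactly $d_\infty(y_1,y_2) \le (1+K_y)\, d_{sk}(y_1,y_2)$ by the definition of the Skorokhod metric in equation~\eqref{eq:skorokhod}. Plugging this into the Hölder bound from the previous proposition finishes the argument.

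I do not anticipate a major obstacle here — the work is already done by the earlier $d_\infty$ result, and Lipschitz continuity of the signals is precisely what is needed to absorb the temporal distortion introduced by the retiming into a spatial bound. The only subtle point is that the infimum in the Skorokhod definition need not be attained, but since we pass to the infimum only after bounding uniformly in $r$, no attainment is required; an $\varepsilon$-suboptimal retiming would also work and the bound tightens to the infimum in the limit.
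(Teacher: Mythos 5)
Your proof is correct and follows essentially the same route as the paper: both establish the key inequality $d_\infty(y_1,y_2)\le (1+K_y)\,d_{sk}(y_1,y_2)$ by using Lipschitz continuity of the signal to absorb the retiming distortion, and then invoke the already-proved $d_\infty$ Hölder bound on $\rho^\phi$. If anything, your version is slightly more careful than the paper's, which speaks of ``the optimal retiming'' even though the infimum in the Skorokhod definition need not be attained — a gap you explicitly close by bounding uniformly over arbitrary retimings before passing to the infimum.
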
 

\begin{proof}
Let $r(t)$ be the optimal retiming in the definition of the Skorokhod metric. Let
$d_{sk}(y_1,y_2) = \alpha$ for brevity. Under the optimal retiming $r$, this means
that $d_\infty(r(t),t)) \le \alpha$ and $d_\infty(y_1,y_2 \circ r) \le \alpha$. From the second term, we derive
\begin{align*}
&\alpha \ge d_\infty(y_1,y_2 \circ r) \\
& =  \sup_t \| y_1(t)  - y_2(r(t)) \| \\
& =  \sup_t \| (y_1(t) - y_2(t)) - (y_2(r(t)) - y_2(t)) \| \\
& \overset{(a)}{\ge}   \sup_t \| y_1(t) - y_2(t) \| - \sup_t \| y_2(r(t)) - y_2(t) \| \\
& \overset{(b)}{\ge}  d_\infty(y_1,y_2) - K_y \sup_t |r(t) - t| \\
& =  d_\infty(y_1,y_2) - K_y \alpha 
\end{align*}
where we used the reverse triangle inequality in $(a)$,
the fact that $y_2(t)$ is Lipschitz continuous with constant $K_y$ in $(b)$,
and that $-d_\infty(r(t),t) > -\alpha$ in $(c)$.
In other words, $\alpha \ge d_\infty(y_1,y_2) - K_y \alpha$, or
\begin{align*}
(1+K_y)d_sk(y_1,y_2) \ge d_\infty(y_1,y_2) \label{eq:sksup}
\end{align*}
From Proposition \ref{lemma:hölder}, we know that $|\rho^\phi(y_1)-\rho^\phi(y_2)| \le d_\infty(y_1,y_2)$ so that it follows that
\begin{align*}
|\rho^\phi(y_1) - \rho^\phi(y_2)| \le (1+K_y)d_{sk}(y_1,y_2).
\end{align*}

\end{proof}

\section{H\"{o}lder Continuity of $C(y) := \int_{0}^{T} y(t)^\top y(t)dt$ w.r.t $d_1$}
\label{sec:hölder3}

\begin{proposition}\label{prop:6}
Let $C(y) := \int_{0}^{T} y(t)^\top y(t)dt$ be the performance function and let $d_1:\mathcal{Y}\times\mathcal{Y}\to\mathbb{R}$ be the $L_1$ signal metric. If the output signal is of bounded magnitude, 
i.e., there exists $y_{\max}$ s.t. $\|y(t)\| < y_{\max}$, then it holds that the performance function $C$ is Hölder continuous w.r.t. $d_1$ with $H=2y_{\max}$ and $\gamma=1$.
\end{proposition}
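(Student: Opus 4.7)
The plan is to reduce the bound on $|C(y_1)-C(y_2)|$ to a pointwise bound on the integrand and then integrate, where the integrand can be factored via the difference-of-squares identity for inner products. Concretely, I would begin by writing
\[
C(y_1)-C(y_2) \;=\; \int_{0}^{T}\bigl(y_1(t)^\top y_1(t) - y_2(t)^\top y_2(t)\bigr)\,dt,
\]
and then use the algebraic identity $a^\top a - b^\top b = (a-b)^\top(a+b)$ with $a:=y_1(t)$ and $b:=y_2(t)$ to rewrite the integrand as $(y_1(t)-y_2(t))^\top(y_1(t)+y_2(t))$.

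Next, I would apply the Cauchy--Schwarz inequality pointwise in $t$ to get
\[
\bigl|y_1(t)^\top y_1(t) - y_2(t)^\top y_2(t)\bigr| \;\le\; \|y_1(t)-y_2(t)\|\cdot\|y_1(t)+y_2(t)\|,
\]
and then control the second factor with the triangle inequality together with the boundedness assumption: $\|y_1(t)+y_2(t)\| \le \|y_1(t)\|+\|y_2(t)\| \le 2 y_{\max}$. This yields the pointwise estimate $\bigl|y_1(t)^\top y_1(t) - y_2(t)^\top y_2(t)\bigr|\le 2 y_{\max}\,\|y_1(t)-y_2(t)\|$.

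Finally, I would pull the absolute value inside the integral via the triangle inequality for integrals and use the pointwise bound to conclude
\[
|C(y_1)-C(y_2)| \;\le\; \int_{0}^{T} 2 y_{\max}\,\|y_1(t)-y_2(t)\|\,dt \;=\; 2 y_{\max}\,d_1(y_1,y_2),
\]
which is exactly Hölder continuity with $H=2y_{\max}$ and $\gamma=1$. There is no substantive obstacle here: the bounded-magnitude hypothesis is precisely what turns the quadratic term into something that can be controlled linearly by a signal difference, and everything else is Cauchy--Schwarz plus the triangle inequality. The only minor point to be careful about is that the inequality holds almost everywhere (which is enough for the integral bound), so no measurability subtleties arise beyond those already implicit in $y_1,y_2\in\mathcal{Y}$.
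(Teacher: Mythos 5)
Your proof is correct and follows essentially the same route as the paper: factor the difference of the quadratic terms pointwise, bound one factor by $2y_{\max}$ and the other by $\|y_1(t)-y_2(t)\|$, and integrate. The only cosmetic difference is that you factor the inner products as $(y_1-y_2)^\top(y_1+y_2)$ and apply Cauchy--Schwarz, whereas the paper factors $\|y_1\|^2-\|y_2\|^2$ as a product of scalars and applies the reverse triangle inequality; your version also handles the absolute value a bit more explicitly, which the paper dispatches with ``the other direction follows similarly.''
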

\begin{proof}
We can derive the following chain of arguments
\begin{align*}
C(y_1) - C(y_2) & =  \int_{0}^{T} (y_1(t)^\top y_1(t) - y_2(t)^\top y_2(t))\ dt  \\
                & =  \int_{0}^T (\|y_1(t)\|^2_2 -
                                  \|y_2(t)\|^2_2) dt \\
                & =  \int_{0}^T (\|y_1(t)\| + \|y_2(t)\|)
                                 (\|y_1(t)\| - \|y_2(t)\|)dt\\
                & \overset{(a)}{\le} \int_{0}^T 2y_{\max}\|y_1(t) - y_2(t)\|dt    \\
                & \overset{(b)}{\le}  2y_{\max} d_1(y_1,y_2)
\end{align*}
where $(a)$ follows by the use of the reverse triangle inequality and $(b)$ follows by the boundedness assumption of signals. The other direction follows similarly so that the result follows trivially.
\end{proof}

\addtolength{\textheight}{-2cm}   % This command serves to balance the column lengths
                                  % on the last page of the document manually. It shortens
                                  % the textheight of the last page by a suitable amount.
                                  % This command does not take effect until the next page
                                  % so it should come on the page before the last. Make
                                  % sure that you do not shorten the textheight too much.

\end{document}